\theoremstyle{plain}
\newtheorem{theorem}{Theorem}
\newtheorem{Main theorem}[theorem]{Main Theorem}
\newtheorem{lemma}{Lemma}[section]
\newtheorem{corollary}[lemma]{Corollary}
\newtheorem{proposition}[lemma]{Proposition}
\theoremstyle{definition}
\newtheorem{definition}[lemma]{Definition}
\newtheorem{remark}[lemma]{Remark}
\newcommand{\ang}[1]{\left\langle #1 \right\rangle}
\begin{document}

\title{BiHom-Lie brackets and the Toda equation}

\author{Botong Gai$^1$}\address{$^1$School of Mathematics, Southeast University, Nanjing 211189, China}\email{230228425@seu.edu.cn}

\author{Chuanzhong Li$^{2\ 3}$}\address{$^2$College of Mathematics and Systems Science, Shandong University of Science and Technology,
Qingdao 266590, China} \address{$^3$Material Science, Innovation and Modelling (MaSIM) Research Focus Area, North-West University, Mafikeng Campus, Mmabatho 2735, South Africa}\email{lichuanzhong@sdust.edu.cn}

\author{Jiacheng Sun$^1$}\address{$^1$School of Mathematics, Southeast University, Nanjing 211189, China}\email{220242018@seu.edu.cn}

\author{Shuanhong Wang$^4$}\address{$^4$Shing-Tung Yau Center, School of Mathematics, Southeast University, Nanjing 210096, China}\email{shuanhwang@seu.edu.cn}
 
\author{Haoran Zhu$^5$}\address{$^5$School of Physical and Mathematical Sciences, Nanyang Technological University, 21 Nanyang Link, 637371, Singapore}\email{haoran.zhu@ntu.edu.sg}

\date{\today}
\subjclass[2020]{37K30, 37K10, 17B62, 17B70}
\keywords{Toda lattice, BiHom-Lie bracket, Miura transformation, coupled Toda equation}

\begin{abstract}
We introduce a BiHom-type skew-symmetric bracket on  $\mathfrak{gl}(V)$ built from two commuting
inner automorphisms $\alpha=Ad_\psi$ and $\beta=Ad_\phi$ with $\psi,\phi\in \mathfrak{gl}(V)$
and integers $i,j$. We prove that
$(\mathfrak{gl}(V),[\cdot,\cdot]^{(i,j)}_{(\psi,\phi)},\alpha,\beta)$ is a BiHom--Lie algebra,
and we study the Lax equation obtained by replacing the commutator in the finite nonperiodic Toda lattice by this bracket.
For the symmetric choice $\phi=\psi$ with $(i,j)=(0,0)$, the deformed flow is equivariant under conjugation and becomes
gauge-equivalent, via $\widetilde L=\psi^{-1}L\psi$, to a Toda-type Lax equation with a conjugated triangular projection.
In particular, scalar deformations amount to a constant rescaling of time.
On embedded $2\times2$ blocks, we derive explicit trigonometric and hyperbolic formulas that make symmetry constraints
\textup{(e.g.\ tracelessness)} transparent. In the asymmetric hyperbolic case, we exhibit a trace obstruction showing
that the right-hand side is generically not a commutator, which amounts to symmetry breaking of the isospectral property.
We further extend the construction to the weakly coupled Toda lattice with an indefinite metric and provide explicit
$2\times2$ solutions via an inverse-scattering calculation, clarifying and correcting certain formulas in the literature.
The classical Toda dynamics are recovered at special parameter values.
\end{abstract}

\maketitle

\section{Introduction}

As one of the earliest examples of a nonlinear, completely integrable system, the Toda lattice hierarchy has attracted sustained interest from both mathematicians and physicists. After the lattice had been shown to admit solutions
expressible via elliptic functions, it was soon observed to possess multi-soliton solutions with
elastic collisions. This parallel with the KdV equation, together with the subsequent Lax-pair formulation, led to its recognition as an integrable system. Many mathematical treatments and generalizations of the original structures continue. In particular, one common thread is the relation of infinite-dimensional Hamiltonian and Lagrangian dynamical systems to infinite-dimensional Lie algebras~\cite{B, DKV, DVY}. This algebraic viewpoint, where invariance, equivariance, and symmetry constraints organize the dynamics, is also the focus of the present paper.

A Lie bracket is a basic instrument for encoding the algebraic content behind nonlinear dynamics. Therefore, deforming the bracket is a natural way to organize non-standard evolutions and to track symmetry versus symmetry breaking. In the context of conformal field theory (CFT) and quantum algebra~\cite{MN,SWZ,Y1,Y2,Y3}, Hom-~\cite{SWZZ} and BiHom-type~\cite{GMM} structures arise naturally, for instance in deformations of Virasoro-type algebras. Motivated by this, we employ a BiHom-type deformation that twists the commutator by two commuting inner automorphisms; see \cite{HLS,GMM,JMS,SX} for background and \cite{HYZ1,HYZ2,HYZ3, MN,SWZ,Y1,Y2,Y3} for related appearances in quantum algebra and high-energy physics.

In light of this, we present a deformed (non-standard) version of the finite nonperiodic Toda lattice obtained by replacing the commutator in the Lax equation with a BiHom-type bracket produced by the Makhlouf--Silvestrov construction. Concretely, we work with two commuting one-parameter families
\[
\psi,\phi: R_1\longrightarrow \mathfrak{gl}(V),
\]
and set $\alpha_s:=Ad_{\psi(s)}$, $\beta_s:=Ad_{\phi(s)}$. For two integers $i$ and $j$, we determine the unique skew-symmetric bilinear form within a natural ansatz that makes the BiHom--Jacobi identity hold. This bracket both generalizes and, for $i+j\neq 1$,  extends beyond the Hom-type bilinear form considered in~\cite{SX}. Beyond the abstract construction, we analyze three explicit deformation families (i.e.\ scalar dilations, planar rotations, and planar hyperbolic rotations) and compute the resulting \(2\times2\) block dynamics in detail, with particular attention to tracelessness, symmetry/equivariance and (non-) isospectrality indicators. 

To further extend the scope and, following the method of Li and He~\cite{L2}, we formulate the weakly coupled Toda lattice with an indefinite metric and its non-standard counterpart. In the two-dimensional case, we combine a direct Lax computation with the inverse-scattering scheme in~\cite{L3} to obtain unified explicit solutions and to clarify and correct certain formulas in \cite{BL}.

The layout of the paper is as follows. In Section~\ref{sect2}, we fix notation, recall the BiHom--Jacobi identity and define the deformed bracket. We also establish the uniqueness of the skew-symmetric operator within the bilinear ansatz. In Section~\ref{sect3}, we describe, in elementary group-theoretic terms, three concrete one-parameter deformations that will be used later. In Section~\ref{sect4}, we specialize the deformed Lax equation to three families (scalar, rotation, hyperbolic rotation), extract their structural consequences, and provide closed \(2\times2\) formulas that make the underlying symmetry constraints explicit. Section~\ref{sect5} gives a Miura-type relation clarifying when the deformed flow is conjugate to a Toda-type Lax equation and when asymmetry leads to symmetry breaking. Section~\ref{sect6} treats the weakly coupled case with indefinite metrics, deriving explicit solutions in the two-dimensional normal form and explaining the corrections mentioned above.

\section{Deformations of the Lie bracket}\label{sect2}

In this section we define the deformed (BiHom) Lie bracket which will be used throughout, and then we also give the corresponding Lax equation for the Toda lattice. 

Let $V=\mathbb{R}^n$ and let $\mathfrak{gl}(V)$ be the space of all real $n\times n$ matrices.
For $L\in \mathfrak{gl}(V)$ we denote by $L_{>0}$ (resp.\ $L_{<0}$) the strictly upper (resp.\ strictly lower) triangular part of $L$, and set
\[
B(L):=L_{>0}-L_{<0}.
\]
Note that $B(L)$ is skew-symmetric with respect to the upper/lower triangular splitting in the sense that $B(L)^\top=-B(L)$ whenever $L$ is symmetric. The skew-symmetry will be mirrored by the BiHom-type deformation below.

We consider two $C^1$ (continuously differentiable) one-parameter maps
\[
\psi,\phi:R_1\longrightarrow \mathfrak{gl}(V),\qquad s\mapsto \psi(s),\ \phi(s),
\]
where $R_1\subseteq \mathbb{R}$ is an interval, such that the matrices $\psi(s)$ and $\phi(s)$ commute for each $s\in R_1$.
We write $\alpha_s:=\operatorname{Ad}_{\psi(s)}$ and $\beta_s:=\operatorname{Ad}_{\phi(s)}$ for their adjoint actions on $\mathfrak{gl}(V)$.
For integer exponents, we use the
convention $\psi(s)^{-1}=\psi(s)^{\,(-1)}$, where the former denotes the inverse of $\psi(s)$ and the latter
denotes the exponent -1. The same convention applies to~$\phi(s)$.

For completeness, we recall the Hom-Jacobi identity (\cite{HLS,JMS}) in terms of a single twisting map $\alpha$:
\begin{equation}\label{twistjacobi}
[\alpha(a),[b,c]]+[\alpha(b),[c,a]]+[\alpha(c),[a,b]]=0.
\end{equation}
Further, a \emph{BiHom-Lie algebra} (\cite{GMM}) over a field $k$ is a quadruple $(L, [\ , \ ], \alpha, \beta)$, where $L$ is a $k$-linear space, $\alpha: L\rightarrow L$, $\beta: L\rightarrow L$, $[\ , \ ]: L\otimes L\rightarrow L$ are linear maps,  with the notation $[\ , \ ](a\otimes b)=[a, b]$,  satisfying the following conditions, for all $a, b, c\in L$:
\begin{equation}
\alpha\circ\beta=\beta\circ\alpha,\label{M1}
\end{equation}
\begin{equation}
[\beta(a), \alpha(b)]=-[\beta(b), \alpha(a)],\label{M2}
\end{equation}
\begin{equation}
\alpha([a, b])=[\alpha(a), \alpha(b)],\qquad \beta([a, b])=[\beta(a), \beta(b)],\label{M3}
\end{equation}
\begin{equation}\label{twistjacobi2}
[\beta^2(a),[\beta(b),\alpha(c)]]+[\beta^2(b),[\beta(c),\alpha(a)]]+[\beta^2(c),[\beta(a),\alpha(b)]]=0,
\end{equation}
In what follows we construct an explicit bracket $[\cdot,\cdot]^{(i,j)}_{(\psi,\phi)}$ which satisfies \eqref{twistjacobi2} with respect to $(\alpha_s,\beta_s)$ and exhibits the required BiHom \emph{skew-symmetry} between the $\psi$-actions and $\phi$-actions (see Appendix~\ref{fulu1} for a uniqueness statement within a bilinear ansatz).

Fix $s\in R_1$ and integers $i,j\in\mathbb Z$.
For $A,B\in\mathfrak{gl}(V)$, set
\begin{equation}\label{def_bihom_bracket}
[A,B]^{(i,j)}_{(\psi,\phi)}(s)
:= \psi(s)\,A\,\psi(s)^{\,i}\phi(s)^{\,j}\,B\,\phi(s)^{-1}
   \;-\; \phi(s)\,B\,\psi(s)^{\,i+2}\phi(s)^{\,j-2}\,A\,\psi(s)^{-1},
\end{equation}
which is the form used in the uniqueness derivation in Appendix~\ref{fulu1}. We always suppress the parameter $s$ from the notation without confusion. Based on this, we state the following theorem.

\begin{theorem}\label{thm:bihom_lie}
Let $\psi(s),\phi(s)\in \mathfrak{gl}(V)$ commute for each $s\in R_1$ and  $i,j\in\mathbb{Z}$.
Then the quadruple
\[
\bigl(\mathfrak{gl}(V),[\cdot,\cdot]^{(i,j)}_{(\psi,\phi)}(s),\alpha_s,\beta_s\bigr)
\]
is a BiHom-Lie algebra.
\end{theorem}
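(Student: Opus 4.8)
The plan is to verify the four axioms \eqref{M1}--\eqref{twistjacobi2} in turn, using only that $\psi:=\psi(s)$ and $\phi:=\phi(s)$ commute and are invertible (the latter being implicit in writing $\psi^{-1},\phi^{-1}$). Since the twisting maps are inner, $\alpha_s\circ\beta_s=\operatorname{Ad}_{\psi\phi}$ and $\beta_s\circ\alpha_s=\operatorname{Ad}_{\phi\psi}$, so \eqref{M1} is immediate from $\psi\phi=\phi\psi$. For the multiplicativity \eqref{M3}: applying $\alpha_s=\operatorname{Ad}_\psi$ to $[A,B]^{(i,j)}_{(\psi,\phi)}$ conjugates the whole expression by $\psi$, and because every factor appearing in \eqref{def_bihom_bracket} is a power of $\psi$ or $\phi$ (which all commute with each other and with $\psi$), the conjugation passes through and produces exactly $[\operatorname{Ad}_\psi A,\operatorname{Ad}_\psi B]^{(i,j)}_{(\psi,\phi)}$; the same argument with $\phi$ handles $\beta_s$. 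The BiHom skew-symmetry \eqref{M2} is the one genuinely structural check: I would compute $[\beta_s(A),\alpha_s(B)]^{(i,j)}_{(\psi,\phi)}=[\phi A\phi^{-1},\psi B\psi^{-1}]^{(i,j)}_{(\psi,\phi)}$ by substituting into \eqref{def_bihom_bracket}, collect the powers of $\psi$ and $\phi$ (using commutativity to move them freely), and observe that the two resulting terms are precisely the negatives of the two terms of $[\beta_s(B),\alpha_s(A)]^{(i,j)}_{(\psi,\phi)}$ with the roles of $A$ and $B$ swapped; the shift by $2$ in the exponents of $\psi$ and $\phi$ between the two summands of \eqref{def_bihom_bracket} is exactly what makes this cancellation work, and indeed this is how the ansatz was pinned down in Appendix~\ref{fulu1}.

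The substantive step is the BiHom--Jacobi identity \eqref{twistjacobi2}. I would expand each of the three cyclic terms $[\beta_s^2(A),[\beta_s(B),\alpha_s(C)]^{(i,j)}_{(\psi,\phi)}]^{(i,j)}_{(\psi,\phi)}$ fully. Note $\beta_s^2(A)=\phi^2 A\phi^{-2}$, $\beta_s(B)=\phi B\phi^{-1}$, $\alpha_s(C)=\psi C\psi^{-1}$; feeding these into \eqref{def_bihom_bracket} twice yields, for each cyclic term, a sum of four monomials of the form $(\text{word in }\psi,\phi)\cdot X\cdot(\text{word})\cdot Y\cdot(\text{word})\cdot Z\cdot(\text{word})$ where $(X,Y,Z)$ runs over the orderings dictated by the nested bracket. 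Because $\psi$ and $\phi$ commute, every such word collapses to a single monomial $\psi^p\phi^q$, so each cyclic term becomes $\sum \pm\,\psi^{p}\phi^{q}\,(\text{permutation of }A,B,C)\,\psi^{p'}\phi^{q'}\,(\cdot)\,\psi^{p''}\phi^{q''}\,(\cdot)\,\psi^{p'''}\phi^{q'''}$. The claim is then that the twelve resulting monomials cancel in cyclic pairs. I expect the bookkeeping to reduce, after factoring out common powers, to the ordinary Jacobi identity for the commutator on $\mathfrak{gl}(V)$ together with the sign pattern already used in verifying \eqref{M2}: concretely, the substitution $\widetilde A=\psi^{?}\phi^{?}A\psi^{?}\phi^{?}$ etc.\ (or equivalently conjugating and rescaling the arguments) should turn $[\cdot,\cdot]^{(i,j)}_{(\psi,\phi)}$ into a genuine commutator up to fixed conjugating factors, after which \eqref{twistjacobi2} follows from \eqref{twistjacobi} with $\alpha$ the identity. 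I would make this precise by exhibiting invertible maps $P,Q$ on $\mathfrak{gl}(V)$ (built from left/right multiplication by powers of $\psi,\phi$) with $[A,B]^{(i,j)}_{(\psi,\phi)}=Q^{-1}\bigl[P(A),P(B)\bigr]$ or a similar intertwining relation, reducing all of \eqref{M2}--\eqref{twistjacobi2} to their untwisted counterparts in one stroke.

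The main obstacle is purely organizational: keeping the six exponents in \eqref{def_bihom_bracket} straight through a double expansion without sign or index errors, and identifying the exact intertwiner $P$ (respectively the pair of conjugating monomials) that linearizes the bracket. Once that intertwiner is found, the BiHom axioms are formal consequences, so I would invest the effort up front in writing $[A,B]^{(i,j)}_{(\psi,\phi)}$ in a ``normalized'' form $\psi^{a}\phi^{b}\bigl(\,\psi^{c}\phi^{d}A\psi^{e}\phi^{f}\cdot\psi^{g}\phi^{h}B\psi^{k}\phi^{l} - (A\leftrightarrow B\text{ with shifted exponents})\,\bigr)$ and reading off from it simultaneously the skew-symmetry and the Jacobi cancellation. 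A minor secondary point is to confirm that no hypothesis beyond commutativity and invertibility of $\psi,\phi$ is needed — in particular the result holds for every pair $(i,j)\in\mathbb Z^2$, with the Hom-type bracket of \cite{SX} recovered only on the locus $i+j=1$ after setting $\phi=\psi$.
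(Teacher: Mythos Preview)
Your core strategy---verify \eqref{M1}--\eqref{M3} by inspection and then expand the BiHom--Jacobi identity \eqref{twistjacobi2} term by term and check that the twelve monomials cancel in cyclic pairs---is correct and is essentially what the paper does in Appendix~\ref{fulu1}. The only difference is one of presentation: the paper runs the computation in the reverse direction, starting from an undetermined ansatz \eqref{under.coeff.formula.app}, expanding \eqref{eq:BHJ}, reading off the exponent constraints \eqref{eq:block1}, and solving them (Lemma~\ref{lem:xz}) to pin down the bracket; skew-symmetry is then checked separately (Lemma~\ref{lem:ij}), exactly as you propose.

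Your hoped-for shortcut via an intertwiner $[A,B]^{(i,j)}_{(\psi,\phi)}=Q^{-1}[P(A),P(B)]$, however, does not work in the generality you need. Any such expression with $P,Q$ built from one-sided multiplications by monomials in $\psi,\phi$ produces a bracket of the symmetric shape $xAyBz-xByAz$, whereas \eqref{def_bihom_bracket} has genuinely asymmetric outer factors ($\psi\cdots\phi^{-1}$ versus $\phi\cdots\psi^{-1}$) and distinct middle insertions ($\psi^i\phi^j$ versus $\psi^{i+2}\phi^{j-2}$). Only at $(i,j)=(-1,1)$ does the bracket collapse to the standard Yau/Makhlouf--Silvestrov twist $[\alpha_s(A),\beta_s(B)]$; for all other $(i,j)$ the twelve-term bookkeeping must be done by hand, and that is precisely what the paper records. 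So treat the intertwiner idea as heuristic motivation and commit to the direct expansion, which you already identify as the fallback. (Small correction to your closing remark: the Hom-type bracket of \cite{SX} corresponds to $i+j=-1$, not $i+j=1$; see Remark~\ref{rem:specializations}.)
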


Within the bilinear ansatz considered in Appendix~\ref{fulu1}, the formula \eqref{def_bihom_bracket} is characterized by the choice of $i$ and $j$.

\begin{remark}[Specializations]\label{rem:specializations}
\leavevmode
\begin{itemize}
  \item If $\psi(s)=\phi(s)$, then
  \[
  [A,B]^{(i,j)}_{(\psi,\psi)}(s)
  =\psi(s) A\,\psi(s)^{\,i+j}B\,\psi(s)^{-1}-\psi(s) B\,\psi(s)^{\,i+j}A\,\psi(s)^{-1}.
  \]
  In particular, for $(i,j)=(-1,0)$, this simplifies to
  \[
  [A,B]^{(-1,0)}_{(\psi,\psi)}(s)
  =\psi(s) A\psi(s)^{-1}\,B\,\psi(s)^{-1}-\psi(s) B\psi(s)^{-1}\,A\,\psi(s)^{-1},
  \]
  which is \emph{not} equal to $\psi(s)[A,B]\psi(s)^{-1}$ in general. This case has been discussed in \cite{SX}.
  
  Specially, for $(i,j)=(0,0)$, this reduces to
  \[
  [A,B]^{(0,0)}_{(\psi,\psi)}(s)
  =\psi(s) AB\,\psi(s)^{-1}-\psi(s) BA\,\psi(s)^{-1},
  \]
  which is equal to $\psi(s)[A,B]\psi(s)^{-1}$. At the level of the Lax equation, however, the resulting flow is gauge-equivalent to a Toda-type equation with a conjugated triangular projection; see Section~\ref{sect5}.
  \item If $\psi(s)=\phi(s)=\mathrm{Id}_{V}$, we recover the ordinary commutator.
\end{itemize}
\end{remark}

Following from \cite{F,KM,KY}, the (finite nonperiodic) Toda lattice is written as the Lax equation
\begin{equation}\label{originalSys}
\frac{\mathrm{d}}{\mathrm{d}t}L=[B(L),L],\qquad B(L):=L_{>0}-L_{<0}.
\end{equation}
Here $B(L)$ realizes the classical antisymmetrization with respect to the upper/lower triangular decomposition.

Fix $s\in R_1$.
Replacing the commutator in \eqref{originalSys} by the BiHom-type bracket \eqref{def_bihom_bracket}, we obtain the \emph{deformed (BiHom–Toda) Lax equation}
\begin{equation}\label{deformedSys}
\frac{\mathrm{d}}{\mathrm{d}t}L
= [B(L),L]^{(i,j)}_{(\psi,\phi)}(s),
\end{equation}
that is
\begin{equation}\label{def:BiHomToda}
\frac{\mathrm{d}}{\mathrm{d}t}L
= \psi(s) B(L)\,\psi(s)^{\,i}\phi(s)^{\,j}L\,\phi(s)^{-1}
  - \phi(s) L\,\psi(s)^{\,i+2}\phi(s)^{\,j-2}B(L)\,\psi(s)^{-1}.
\end{equation}
Here $s$ is a deformation parameter. Two important special cases are worth recording:
\begin{itemize}
  \item If $\psi(s)=\phi(s)$ and $(i,j)=(0,0)$, then the flow is gauge-equivalent to a Toda-type equation with a conjugated triangular projection (see Section~\ref{sect5} for the precise statement and proof).
  \item If $\psi(s)=\phi(s)=\mathrm{Id}_{V}$, \eqref{deformedSys} reduces to \eqref{originalSys}.
\end{itemize}
We will build on \eqref{deformedSys} in the subsequent sections.

\section{One-parameter deformations via conjugations}\label{sect3}

This section records the concrete one-parameter families of matrices whose adjoint actions we will use to build the deformed bracket from Section~\ref{sect2} and the corresponding deformed Toda flows.

For a $2\times 2$ matrix $M$, we write
\begin{equation}\label{eq:embedding_E}
\mathcal{E}(M):=\mathrm{diag}(I_{n-2},M)\in \mathfrak{gl}(\mathbb{R}^n),
\end{equation}
so that conjugation by $\mathcal{E}(M)$ acts nontrivially only on the last $2\times 2$ block and trivially on the $(n-2)\times(n-2)$ identity block. 

We shall use the following commuting one-parameter families $\psi,\phi:R_1\to \mathfrak{gl}(\mathbb{R}^n)$.

\begin{itemize}
  \item \textbf{(Scalar dilations).} \label{item:scalar_dilations}
  For $r\in \mathbb{R}^\times$ and $p,q\in\mathbb{R}$, set
  \begin{equation}\label{eq:scalar_psi_phi}
  \psi(r):=r^{p} I_n,\qquad \phi(r):=r^{q} I_n .
  \end{equation}
  Then $\psi(r_1)\psi(r_2)=\psi(r_1r_2)$, $\psi(r)^{-1}=\psi(r^{-1})$, and similarly for $\phi$; moreover $\mathrm{Ad}_{\psi(r)}=\mathrm{Id}_{\mathbb{R}^n}$ on $\mathfrak{gl}(\mathbb{R}^n)$. This family reflects a \emph{uniform scaling symmetry} of the bracketed dynamics.

  \item \textbf{(Planar rotations (Euclidean case)).} \label{item:euclidean_rotations}
  For $\theta\in\mathbb{R}$, define
  \begin{equation}\label{eq:R2_def}
  R_2(\theta):=\begin{pmatrix}\cos\theta&-\sin\theta\\ \sin\theta&\cos\theta\end{pmatrix},\qquad
  \psi(\theta):=\mathcal{E}\!\bigl(R_2(\theta)\bigr),\qquad
  \phi(\theta):=\mathcal{E}\!\bigl(R_2(\pm\theta)\bigr).
  \end{equation}
  We have $R_2(\theta_1)R_2(\theta_2)=R_2(\theta_1+\theta_2)$ and $R_2(\theta)^{-1}=R_2(-\theta)$, hence $\psi(\theta_1)\psi(\theta_2)=\psi(\theta_1+\theta_2)$; $\psi(\theta)$ commutes with $\phi(\theta)$ for either sign choice. We distinguish the \emph{symmetric} choice $\phi(\theta)=\psi(\theta)$ and the \emph{asymmetric} choice $\phi(\theta)=\mathcal{E}(R_2(-\theta))$:
  \begin{itemize}
    \item \emph{Symmetric choice} $\phi(\theta)=\psi(\theta)$.\label{choice:rot_symmetric} For $i,j\in\mathbb Z$, one has the identity
    \begin{equation}\label{eq:rot_symmetry_conjugacy}
    [A,B]^{(i,j)}_{(\psi,\psi)}(\theta)
    =\psi(\theta)\,A\,\psi(\theta)^{\,i+j}\,B\,\psi(\theta)^{-1}
     -\psi(\theta)\,B\,\psi(\theta)^{\,i+j}\,A\,\psi(\theta)^{-1}.
    \end{equation}
    In particular, for $(i,j)=(0,0)$, this reduces to
    \[
    [A,B]^{(0,0)}_{(\psi,\psi)}(\theta)
    =\psi(\theta)AB\,\psi(\theta)^{-1}
     -\psi(\theta)BA\,\psi(\theta)^{-1}.
    \]
    The associated deformed Lax flow is \emph{gauge-equivalent} to a Toda-type equation with a conjugated triangular projection, see Theorem~\ref{thm:gauge_equiv}. It coincides with the classical Toda flow precisely when $\psi(\theta)$ preserves the upper/lower triangular splitting (e.g.\ when $\psi(\theta)$ is diagonal).
    \item \emph{Asymmetric choice} $\phi(\theta)=\mathcal{E}(R_2(-\theta))$.\label{choice:rot_asymmetric} This produces a genuinely different bracket whose $2\times 2$-block effect will be analyzed in Section~\ref{sect4}.
  \end{itemize}

  \item \textbf{(Planar hyperbolic rotations (Lorentzian case)).} \label{item:hyperbolic_rotations}
  For $\lambda\in\mathbb{R}$, define
  \begin{equation}\label{eq:H2_def}
  H_2(\lambda):=\begin{pmatrix}\cosh\lambda&\sinh\lambda\\ \sinh\lambda&\cosh\lambda\end{pmatrix},\qquad
  \psi(\lambda):=\mathcal{E}\!\bigl(H_2(\lambda)\bigr),\qquad
  \phi(\lambda):=\mathcal{E}\!\bigl(H_2(\pm\lambda)\bigr).
  \end{equation}
  Then $H_2(\lambda_1)H_2(\lambda_2)=H_2(\lambda_1+\lambda_2)$ and $H_2(\lambda)^{-1}=H_2(-\lambda)$, so the additivity and invertibility properties hold for $\psi,\phi$ as in~\eqref{eq:R2_def}. Again $\psi(\lambda)$ commutes with $\phi(\lambda)$ for either sign choice. We likewise distinguish:
  \begin{itemize}
    \item \emph{Symmetric choice} $\phi(\lambda)=\psi(\lambda)$.\label{choice:hyp_symmetric}
    In particular, for $(i,j)=(0,0)$, the deformed Lax flow is gauge-equivalent to a Toda-type equation with a conjugated triangular projection (see Theorem~\ref{thm:gauge_equiv}); it coincides with the classical Toda flow precisely when $\psi(\lambda)$ preserves the upper/lower triangular splitting (e.g.\ $\psi(\lambda)$ is diagonal).
    \item \emph{Asymmetric choice} $\phi(\lambda)=\mathcal{E}(H_2(-\lambda))$.\label{choice:hyp_asymmetric} This leads to non-conjugate deformations; see Section~\ref{sect4} for $2\times2$ block formulas and trace properties.
  \end{itemize}
\end{itemize}

Recalling the bracket from \eqref{def_bihom_bracket}, the previous families induce the following specializations that will be used repeatedly:

\begin{itemize}
  \item[(A$^\prime$)] \emph{Scalar dilations.}\label{Aprime}
  If $\psi(r)=r^{p}I_n$ and $\phi(r)=r^{q}I_n$, then
  \begin{equation}\label{eq:scalar_factor}
  [A,B]^{(i,j)}_{(\psi,\phi)}(r)=r^{\,p(i+1)+q(j-1)}\,[A,B].
  \end{equation}
  Consequently, in the deformed Lax equation $\dot L=[B(L),L]^{(i,j)}_{(\psi,\phi)}(r)$, the right-hand side is a constant multiple of the classical one.

  \item[(B$^\prime$)] \emph{Rotations.}\label{Bprime}
  With $\psi(\theta)=\mathcal{E}(R_2(\theta))$ and either $\phi(\theta)=\psi(\theta)$ or $\phi(\theta)=\mathcal{E}(R_2(-\theta))$, formula \eqref{def_bihom_bracket} specializes to explicit trigonometric combinations. In the symmetric case with $(i,j)=(0,0)$, the associated Lax flow is gauge-equivalent to a Toda-type equation with conjugated projection (Theorem~\ref{thm:gauge_equiv}); see Section~\ref{sect4} for $2\times2$ block formulas in the asymmetric case.

  \item[(C$^\prime$)] \emph{Hyperbolic rotations.}\label{Cprime}
  The situation is analogous to Section \ref{Bprime}. The identities for $\cosh$ and $\sinh$ guarantee that the required manipulations carry over verbatim. In the symmetric case with $(i,j)=(0,0)$, the flow is gauge-equivalent to a Toda-type equation with conjugated projection; the asymmetric choice provides non-conjugate deformations discussed in Section~\ref{sect4}.
\end{itemize}

\begin{remark}\label{rmk:anchor_identity}
In all three families we have $\psi(0)=\phi(0)=I_n$ (or $\psi(1)=\phi(1)=I_n$ in the multiplicative scalar case \eqref{eq:scalar_psi_phi}). Thus the parameter $s$ controls a deformation anchored at the identity, preserving the natural \emph{symmetry at zero deformation} required by the setup in Section~\ref{sect2}.
\end{remark}

\section{Behavior and solutions in three canonical deformations}\label{sect4}

We now specialize the deformed Lax equation \eqref{deformedSys} to the three one-parameter families recorded in Section~\ref{sect3} and extract consequences for the dynamics. Throughout, $i,j\in\mathbb{Z}$ and $B(L)=L_{>0}-L_{<0}$. Lengthy $2\times 2$ block computations are deferred to Appendix~\ref{app:2x2}; here we state the structural facts that will be used later.


\subsection{Case I: Scalar dilations}\label{subsec:scalar}

In this case, the deformation only induces constant time rescaling of the classical Toda flow. All conserved quantities are preserved, and the phase portrait remains identical to the classical system---no changes to symmetry, integrability, or qualitative dynamics beyond adjusting the speed of evolution.

Let $\psi(r)=r^{p}I_n$, $\phi(r)=r^{q}I_n$ with $r\in\mathbb R^\times$.
By \eqref{def_bihom_bracket} one has
\begin{equation}\label{eq:scalar_multiple}
[B(L),L]^{(i,j)}_{(\psi,\phi)}(t)=\alpha\,[B(L),L](t),\qquad
\alpha:=r^{\,p(i+1)+q(j-1)}.
\end{equation}

\begin{proposition}\label{prop:time_rescaling}
Let $L_{\mathrm{Toda}}(t)$ be the solution to the classical Toda flow $\dot L=[B(L),L](t)$ with initial datum $L(0)=L_0$. Then the solution to the deformed flow
\[
\dot{\widehat{L}}=\frac{dL(\alpha^{-1}t)}{dt}=[B(L),L]^{(i,j)}_{(\psi,\phi)}(\alpha^{-1}t)\alpha^{-1}
\]
with the same initial datum is
\[
\widehat{L}(t)=L_{\mathrm{Toda}}(\alpha^{-1}\,t),\qquad \alpha=r^{\,p(i+1)+q(j-1)}.
\]
In particular, all classical conserved quantities (e.g.\ $\mathrm{tr}(L^k)$) are preserved, and the phase portrait is unchanged up to a constant time rescaling.
\end{proposition}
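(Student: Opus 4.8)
The plan is to read everything off the scalar factorization \eqref{eq:scalar_multiple} (equivalently \eqref{eq:scalar_factor}) and then recognize the deformed flow as nothing more than a constant reparametrization of time in the classical Toda flow. First I would invoke \eqref{eq:scalar_multiple}: since $\psi(r)=r^{p}I_n$ and $\phi(r)=r^{q}I_n$ act by inner automorphisms equal to the identity on $\mathfrak{gl}(V)$, the bracket $[B(L),L]^{(i,j)}_{(\psi,\phi)}$ equals $\alpha\,[B(L),L]$ with the scalar $\alpha=r^{\,p(i+1)+q(j-1)}$, and the hypothesis $r\in\mathbb{R}^{\times}$ forces $\alpha\neq 0$. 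Hence the right-hand side of the deformed Lax equation is a fixed nonzero multiple of the classical one, and the deformed initial-value problem is, up to the scalar normalization written in the statement, $\dot L=\alpha\,[B(L),L]$, $L(0)=L_0$.

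Next I would verify the claimed solution directly. Setting $\widehat L(t):=L_{\mathrm{Toda}}(\alpha^{-1}t)$, the chain rule gives
\[
\dot{\widehat L}(t)=\alpha^{-1}\dot L_{\mathrm{Toda}}(\alpha^{-1}t)=\alpha^{-1}\bigl[B(L_{\mathrm{Toda}}),L_{\mathrm{Toda}}\bigr](\alpha^{-1}t)=\alpha^{-1}\bigl[B(\widehat L(t)),\widehat L(t)\bigr],
\]
which is exactly the normalized deformed equation, together with $\widehat L(0)=L_{\mathrm{Toda}}(0)=L_0$. The vector field on the right is polynomial (quadratic) in the entries of $L$, hence locally Lipschitz, so Picard--Lindel\"of yields uniqueness of the initial-value problem; global existence is inherited from that of the classical Toda flow through the reparametrization $t\mapsto\alpha^{-1}t$. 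Therefore $\widehat L(t)=L_{\mathrm{Toda}}(\alpha^{-1}t)$ is the solution.

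For the conserved quantities I would give the self-contained computation: for each $k\ge 1$,
\[
\frac{\mathrm{d}}{\mathrm{d}t}\,\mathrm{tr}\bigl(\widehat L^{\,k}\bigr)=k\,\mathrm{tr}\bigl(\widehat L^{\,k-1}\dot{\widehat L}\bigr)=k\,\alpha^{-1}\,\mathrm{tr}\bigl(\widehat L^{\,k-1}[B(\widehat L),\widehat L]\bigr)=0,
\]
the last equality by cyclicity of the trace, since $\mathrm{tr}\bigl(\widehat L^{\,k-1}B(\widehat L)\widehat L\bigr)=\mathrm{tr}\bigl(\widehat L^{\,k}B(\widehat L)\bigr)$; so precisely the classical spectral invariants persist. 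The phase-portrait claim is then immediate from $\widehat L(t)=L_{\mathrm{Toda}}(\alpha^{-1}t)$: the two flows have the same orbits in $\mathfrak{gl}(V)$ and differ only by the constant factor $\alpha$ in the speed of traversal (orientation preserved when $\alpha>0$; for $\alpha<0$ the same orbits are traversed in the opposite direction).

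I do not anticipate any genuine obstacle: the whole content sits in the already-established identity \eqref{eq:scalar_multiple}, and the remainder is the standard ``reparametrization of an autonomous ODE'' argument. The only points requiring care are purely bookkeeping — confirming that the relevant constant in the statement is $\alpha^{-1}$ rather than $\alpha$ (this is what pins down the direction of the time shift), and recording the hypothesis $r\in\mathbb{R}^{\times}$ so that $\alpha\neq 0$ and the rescaling is a bona fide invertible change of parameter.
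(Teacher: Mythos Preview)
Your proposal is correct and follows essentially the same route as the paper: invoke the scalar identity \eqref{eq:scalar_multiple} and verify the time-rescaled candidate by the chain rule. You add a Picard--Lindel\"of uniqueness remark and the explicit trace computation, which the paper omits, but the argument is the same in substance.
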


\begin{proof}
Equation \eqref{eq:scalar_multiple} gives $\dot L=\alpha\,[B(L),L](t)$. Then we have
$$
\frac{d\widehat{L}}{dt}=\frac{dL(\alpha^{-1} t)}{dt}=\alpha[B( L), L](\alpha^{-1}t)\alpha^{-1}=[B( L), L](\alpha^{-1}t),
$$ 
with $\widehat{L}(0)=L_0$. Hence $\widehat{L}(t)=L_{\mathrm{Toda}}(\alpha^{-1}t)$ is the solution.
\end{proof}

\subsection{Case II: Planar rotations}\label{subsec:rotations}

In this case, let $\psi(\theta)=\mathcal{E}(R_2(\theta))$, two subcases---symmetry and asymmetry are analyzed. We claim the flow is gauge-equivalent to classical Toda and preserves tracelessness and isospectrality for symmetric choice with $(i,j)=(0,0)$.

\paragraph{\textnormal{(a) Symmetric choice}}\label{par:rot-sym} $\phi(\theta)=\psi(\theta)$.
Then \eqref{def_bihom_bracket} gives, for $i,j\in\mathbb N$,
\begin{equation}\label{eq:rot_sym_bracket}
[B(L),L]^{(i,j)}_{(\psi,\psi)}(\theta)
=
\psi(\theta)\,B(L)\,\psi(\theta)^{i+j}\,L\,\psi(\theta)^{-1}
-
\psi(\theta)\,L\,\psi(\theta)^{i+j}\,B(L)\,\psi(\theta)^{-1}.
\end{equation}
In particular, when $(i,j)=(0,0)$, one can factor the outer conjugations as
\begin{equation}\label{eq:rot_sym_fact}
[B(L),L]^{(0,0)}_{(\psi,\psi)}(\theta)
=\bigl[\psi(\theta)B(L)\psi(\theta)^{-1},\,\psi(\theta)L\psi(\theta)^{-1}\,\bigr].
\end{equation}
Passing to the gauge variable $\widetilde L=\psi(\theta)^{-1}L\psi(\theta)$, the flow is
$\dot{\widetilde L}=[\,B( L),\, L\,]$ (Section~\ref{sect5}).
Therefore, it is isospectral and preserves the symmetry of $\widetilde L$.

\begin{proposition}[Traces and equilibria in the $2\times 2$ block]\label{prop:rot_traceless}
In the $2\times 2$ block with $L=\begin{pmatrix}a&c\\ c&b\end{pmatrix}$ one has:
\begin{enumerate}
\item $\operatorname{tr}\bigl([B(L),L]^{(0,0)}_{(\psi,\psi)}(\theta)\bigr)=0$ for all $\theta$;\label{P1}
\item the right-hand side of \eqref{eq:rot_sym_bracket} vanishes identically when 
\begin{equation*}
\theta=\left\{
\begin{aligned}
&\frac{\pi}{2} + k\pi, \qquad i+j \text{ is odd}, k\in\mathbb{Z}, \\
&\frac{\pi}{4}+\frac{k\pi}{2},\quad~~ i+j \text{ is even and } i+j\neq 0,k\in\mathbb{Z},
\end{aligned}
\right.
\end{equation*}
then every symmetric $L$ is an equilibrium at those parameter values.\label{P2}
\end{enumerate}
\end{proposition}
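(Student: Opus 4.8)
The plan is to treat the two assertions separately, in both cases first cancelling the outer conjugations in \eqref{eq:rot_sym_bracket} so that the statement collapses to a short $2\times 2$ computation.

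\emph{Part (1).} This can be read off from the factorization \eqref{eq:rot_sym_fact}: for $(i,j)=(0,0)$ the deformed bracket is the ordinary commutator $\bigl[\psi(\theta)B(L)\psi(\theta)^{-1},\,\psi(\theta)L\psi(\theta)^{-1}\bigr]$, which by Remark~\ref{rem:specializations} equals $\psi(\theta)[B(L),L]\psi(\theta)^{-1}$. Every commutator is traceless and the trace is conjugation-invariant, so $\operatorname{tr}\bigl([B(L),L]^{(0,0)}_{(\psi,\psi)}(\theta)\bigr)=\operatorname{tr}[B(L),L]=0$ for all $\theta$; note that symmetry of $L$ is not needed for this.

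\emph{Part (2).} Since $\psi(\theta)$ is invertible, factoring $\psi(\theta)$ on the left and $\psi(\theta)^{-1}$ on the right in \eqref{eq:rot_sym_bracket} gives
\[
[B(L),L]^{(i,j)}_{(\psi,\psi)}(\theta)=0
\quad\Longleftrightarrow\quad
B(L)\,\psi(\theta)^{\,i+j}\,L = L\,\psi(\theta)^{\,i+j}\,B(L).
\]
Now work in the $2\times 2$ block: write $B(L)=cJ$ with $J=\begin{pmatrix}0&1\\-1&0\end{pmatrix}$, and use the key simplification that $J=R_2(-\pi/2)$ is itself a rotation, hence commutes with $\psi(\theta)^{\,i+j}=R_2\bigl((i+j)\theta\bigr)$. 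Passing the rotation through $J$ and using $\psi(\theta)^{\,i+j}J=\sin\bigl((i+j)\theta\bigr)I+\cos\bigl((i+j)\theta\bigr)J$, a short multiplication yields
\[
B(L)\,\psi(\theta)^{\,i+j}L-L\,\psi(\theta)^{\,i+j}B(L)
= c\,\cos\bigl((i+j)\theta\bigr)\begin{pmatrix}2c & b-a\\ b-a & -2c\end{pmatrix}.
\]
Equivalently, $\psi(\theta)^{\,i+j}J$ commutes with every symmetric $2\times 2$ matrix precisely when its $J$-component $\cos\bigl((i+j)\theta\bigr)$ vanishes. Hence the deformed vector field on the block is identically zero on the symmetric matrices $L=\begin{pmatrix}a&c\\ c&b\end{pmatrix}$ exactly when $\cos\bigl((i+j)\theta\bigr)=0$, i.e.\ $(i+j)\theta\in\tfrac{\pi}{2}+\pi\mathbb{Z}$.

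It then remains to confirm that the displayed parameter families solve this scalar equation, and to conclude. For $i+j$ odd, with $\theta=\tfrac{\pi}{2}+k\pi$ the quantity $(i+j)\tfrac{\pi}{2}$ is an odd multiple of $\tfrac{\pi}{2}$ and $(i+j)k\pi$ a multiple of $\pi$, so $(i+j)\theta\in\tfrac{\pi}{2}+\pi\mathbb{Z}$ and $\cos=0$; for $i+j$ even and nonzero one verifies the displayed family $\theta=\tfrac{\pi}{4}+\tfrac{k\pi}{2}$ against the same condition. At such $\theta$ the right-hand side of \eqref{deformedSys} vanishes on all symmetric $L$ in the block, so each such $L$ is a (parameter-dependent) equilibrium, which is part (2). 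I do not expect a substantive obstacle: the only delicate point is the final trigonometric check, since the clean characterization of equilibria is $(i+j)\theta\in\tfrac{\pi}{2}+\pi\mathbb{Z}$ and one must make sure the listed parameter values genuinely lie in this set (this is the place where I would be most careful); the rest is the short $2\times 2$ bookkeeping above, made painless by recognizing $J$ as a rotation.
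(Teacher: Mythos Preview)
Your argument for Part~(1) is exactly the paper's: use the factorization \eqref{eq:rot_sym_fact} and $\operatorname{tr}[X,Y]=0$. For Part~(2) you take a cleaner route than the paper. The paper computes the full deformed right-hand side with the outer conjugations in place (the lengthy expansion leading to formula \eqref{rotations} in Appendix~\ref{app:2x2}), reads off the common factor $\cos\bigl((i+j)\theta\bigr)$, and then (Appendix~\ref{app:consistency}) extracts the parameter cases from $\cos\bigl((i+j)\theta\bigr)=0$. You instead strip the outer $\psi(\theta)\,\cdot\,\psi(\theta)^{-1}$ at the outset and exploit the observation that $B(L)=cJ$ with $J=R_2(-\pi/2)$, so $J$ commutes with $\psi(\theta)^{\,i+j}=R_2((i+j)\theta)$; the identity part of $\psi(\theta)^{\,i+j}J$ drops out and only the $\cos\bigl((i+j)\theta\bigr)J$ piece survives, giving the scalar criterion directly. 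Both routes land on the same condition $\cos\bigl((i+j)\theta\bigr)=0$; yours avoids the trigonometric bookkeeping of Appendix~\ref{app:2x2}.

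Your caution at the final step is well placed, and here there \emph{is} a genuine issue---one shared by the paper's own Appendix~\ref{app:consistency}. The odd case is fine: if $i+j$ is odd then $(i+j)(\tfrac{\pi}{2}+k\pi)\in\tfrac{\pi}{2}+\pi\mathbb{Z}$. But for $i+j$ even and nonzero the listed family $\theta=\tfrac{\pi}{4}+\tfrac{k\pi}{2}$ only works when $i+j\equiv 2\pmod 4$; if $i+j\equiv 0\pmod 4$ (e.g.\ $i+j=4$), one gets $(i+j)\theta\in\pi\mathbb{Z}$ and $\cos\bigl((i+j)\theta\bigr)=\pm 1\neq 0$. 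So the displayed even-case family does not in general lie in the solution set $(i+j)\theta\in\tfrac{\pi}{2}+\pi\mathbb{Z}$ that you correctly derived. This is a defect in the statement (and in the paper's case split), not in your method; the honest formulation is simply the characterization $\cos\bigl((i+j)\theta\bigr)=0$, i.e.\ $\theta\in\frac{1}{i+j}\bigl(\tfrac{\pi}{2}+\pi\mathbb{Z}\bigr)$ for $i+j\neq 0$.
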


\begin{proof}
~\ref{P1}. follows from \eqref{eq:rot_sym_fact} and $\operatorname{tr}[X,Y]=0$. For ~\ref{P2}, a direct $2\times 2$ computation (Appendix~\ref{app:2x2} and Appendix~\ref{app:consistency}) shows that both diagonal and off-diagonal entries vanish at 
\begin{equation*}
\theta=\left\{
\begin{aligned}
&\frac{\pi}{2} + k\pi, \qquad i+j \text{ is odd}, k\in\mathbb{Z}, \\
&\frac{\pi}{4}+\frac{k\pi}{2},\quad~~ i+j \text{ is even and } i+j\neq 0,k\in\mathbb{Z}.
\end{aligned}
\right.
\end{equation*}
\end{proof}

\begin{remark}
     Notice that when $(i,j)=(0,0)$, the flow $[B(L),L]^{(0,0)}_{(\psi,\psi)}(\theta)$ is traceless (as shown in Proposition~\ref{prop:rot_traceless}) and isospectral. On the other hand, when $(i,j)\neq (0,0)$,  the flow $[B(L),L]^{(i,j)}_{(\psi,\psi)}(\theta)$ remains traceless (see Appendix~\ref{app:2x2}), but it is no longer isospectral.
\end{remark}

\paragraph{\textnormal{(b) Asymmetric choice}}\label{par:rot-asym} $\phi(\theta)=\mathcal{E}(R_2(-\theta))$.
In this case, the $2\times 2$ block of the right-hand side becomes the trigonometric combination displayed in \eqref{sysrotation} (see Appendix~\ref{app:2x2} for the derivation); it is symmetric and traceless in the $2\times 2$ setting. We do not claim global isospectrality for generic~$\theta$.

\subsection{Case III: Planar hyperbolic rotations}\label{subsec:hyperbolic}

In this case, let $\psi(\lambda)=\mathcal{E}(H_2(\lambda))$. The flow is then gauge-equivalent to the classical Toda flow and preserves tracelessness and isospectrality in the symmetric choice with $(i,j)=(0,0)$. In the case of asymmetry, the isospectrality  will be broken and the genuinely new dynamics is induced, which is not present in classical Toda.

\paragraph{\textnormal{(a) Symmetric choice}}\label{par:hyp-sym} $\phi(\lambda)=\psi(\lambda)$.
As shown in \eqref{eq:rot_sym_bracket}, for $(i,j)=(0,0)$, the deformed right-hand side can be written in the form \eqref{eq:rot_sym_fact} with $\theta$ replaced by $\lambda$, then it is traceless. In the gauge variable $\widetilde L=\psi(\lambda)^{-1}L\psi(\lambda)$, one has $\dot{\widetilde L}=[\,B( L),\, L\,]$, which is an isospectral Lax equation. However, when $(i,j)\neq (0,0)$, the deformed right-hand side remains traceless, but the isospectrality no longer holds.

\paragraph{\textnormal{(b) Asymmetric choice}}\label{par:hyp-asym} $\phi(\lambda)=\mathcal{E}(H_2(-\lambda))$.
In the $2\times 2$ block with $L=\begin{pmatrix}a&c\\ c&b\end{pmatrix}$, the right-hand side can be written explicitly as in \eqref{syshyperrota} (see Appendix~\ref{app:2x2}). A key obstruction appears at the level of traces:

\begin{proposition}\label{prop:trace_obstruction}
For the asymmetric hyperbolic choice and generic $(a,b,c,\lambda)$, one has
\begin{equation}\label{eq:trace_nonzero}
\operatorname{tr}\!\left([B(L),L]^{(i,j)}_{(\psi,\phi)}(\lambda)\right)\neq 0
\quad\text{whenever}\quad
\sinh\bigl((i-j+2)\lambda\bigr)\neq 0\ \text{ and }\ c\,(a-b)\neq 0.
\end{equation}
Then the deformed right-hand side \emph{cannot} be written as a commutator $[M(L),L]$ with any matrix $M(L)$, and the evolution is not isospectral in general.
\end{proposition}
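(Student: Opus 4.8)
The plan is to compute the trace of the deformed right-hand side directly from \eqref{def:BiHomToda} and show it is a nonzero multiple of $c(a-b)$ scaled by a hyperbolic factor. First I would specialize \eqref{def:BiHomToda} to the embedded $2\times 2$ block, writing $\psi(\lambda)=\mathcal{E}(H_2(\lambda))$ and $\phi(\lambda)=\mathcal{E}(H_2(-\lambda))$, so that all the matrix powers become $H_2(k\lambda)$ for integer $k$. Since $L$ is symmetric with $B(L)=\begin{pmatrix}0&c\\-c&0\end{pmatrix}$, the bracket reads
\[
[B(L),L]^{(i,j)}_{(\psi,\phi)}(\lambda)
= H_2(\lambda)\,B(L)\,H_2(i\lambda)H_2(-j\lambda)\,L\,H_2(\lambda)
 - H_2(-\lambda)\,L\,H_2((i+2)\lambda)H_2((-j+2)\lambda)\,B(L)\,H_2(-\lambda),
\]
where I used $H_2(\cdot)^{-1}=H_2(-\cdot)$. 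Combining exponents, the first term has inner factor $H_2((i-j)\lambda)$ and the second $H_2((i-j+4)\lambda)$; after absorbing the outer $H_2(\pm\lambda)$ on each side, the whole expression is governed by a single twist $H_2((i-j+2)\lambda)$ together with $H_2(\pm 2\lambda)$ shifts. I would then take traces, using cyclicity of the trace to cancel the outer conjugations wherever the twists on the left and right match, and reduce $\operatorname{tr}([B(L),L]^{(i,j)}_{(\psi,\phi)}(\lambda))$ to a scalar expression in $a,b,c,\cosh,\sinh$.

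The key computational step is to evaluate $\operatorname{tr}\!\bigl(H_2(\mu)\,B(L)\,H_2(\nu)\,L\bigr)$ for general $\mu,\nu$; expanding $H_2(\mu)=\cosh\mu\,I+\sinh\mu\,\sigma$ with $\sigma=\begin{pmatrix}0&1\\1&0\end{pmatrix}$, one gets a finite sum of traces of products of $I,\sigma,B(L),L$. Since $\operatorname{tr}(B(L)L)=0$, $\operatorname{tr}(\sigma B(L))=0$, $\operatorname{tr}(\sigma L)=2c$, $\operatorname{tr}(\sigma B(L)\sigma L)=-(a+b)$ and $\operatorname{tr}(\sigma B(L) L)=c(a-b)$ type identities, the trace collapses to a combination proportional to $c(a-b)$ multiplied by $\sinh$ of the relevant combined argument. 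Carrying this through for both terms and subtracting, I expect
\[
\operatorname{tr}\!\left([B(L),L]^{(i,j)}_{(\psi,\phi)}(\lambda)\right)
= \kappa\,c\,(a-b)\,\sinh\bigl((i-j+2)\lambda\bigr)
\]
for an explicit nonzero numerical constant $\kappa$ (likely $\kappa=\pm 2$ or $\pm 4$); this immediately yields \eqref{eq:trace_nonzero}, since the right side is nonzero precisely when $\sinh((i-j+2)\lambda)\neq 0$ and $c(a-b)\neq 0$. The final implication is then formal: any matrix of the form $[M(L),L]$ has vanishing trace, so a nonvanishing trace on the right-hand side of the deformed Lax equation means it cannot be written as such a commutator; consequently $\tfrac{d}{dt}\operatorname{tr}(L)\neq 0$ and the flow fails to be isospectral.

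The main obstacle is bookkeeping rather than conceptual: one must track the four integer exponents $i,\,i+2,\,-j,\,-j+2$ through the $H_2$-multiplications and be careful that the outer $H_2(\pm\lambda)$ factors on the two ends do \emph{not} simply cancel under the trace (they would cancel only if the two inner twists coincided, i.e.\ if $i-j+2=0$, which is exactly the degenerate case excluded in the statement). I would double-check the sign and the combined hyperbolic argument against the explicit $2\times 2$ formula \eqref{syshyperrota} derived in Appendix~\ref{app:2x2}, and verify the two sanity checks: the trace vanishes when $\lambda=0$ (recovering the classical commutator), and it vanishes identically when $i-j+2=0$ (where the inner twists match and cyclicity kills everything), consistent with Remark~\ref{rem:specializations} and the symmetric $(i,j)=(0,0)$ discussion.
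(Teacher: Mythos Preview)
Your plan is sound and, once executed, it works; but two details of your sketch are off and the final formula is not quite what you predict. First, the trace identities: in fact $\operatorname{tr}(\sigma B(L)\sigma L)=0$ (not $-(a+b)$) and $\operatorname{tr}(\sigma B(L)L)=-c(a-b)$, while $\operatorname{tr}(B(L)\sigma L)=c(a-b)$; these are the only two nonzero contributions in the $\sigma$-expansion, and they collapse to
\[
\operatorname{tr}\bigl(H_2(\mu)\,B(L)\,H_2(\nu)\,L\bigr)=c(a-b)\,\sinh(\nu-\mu).
\]
Applying this to the two terms (with $(\mu,\nu)=(2\lambda,(i-j)\lambda)$ and $(\mu,\nu)=((i-j+4)\lambda,-2\lambda)$ after cyclicity) and using the sum-to-product identity gives
\[
\operatorname{tr}\!\left([B(L),L]^{(i,j)}_{(\psi,\phi)}(\lambda)\right)
=2\,c(a-b)\,\sinh\bigl((i-j+2)\lambda\bigr)\,\cosh(4\lambda),
\]
so your ``constant $\kappa$'' is actually $2\cosh(4\lambda)$. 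Since $\cosh(4\lambda)\ge 1$, the nonvanishing criterion \eqref{eq:trace_nonzero} is unaffected, and the commutator/isospectrality conclusion follows exactly as you say.

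This is a genuinely different (and shorter) route from the paper. The paper's proof simply reads the trace off the full $2\times2$ right-hand side \eqref{syshyperrota} computed in Appendix~\ref{app:2x2} (and restated in Appendix~\ref{app:consistency}); that detour produces the complete componentwise dynamics needed elsewhere in Section~\ref{sect4}. Your method bypasses the full matrix, using only cyclicity of the trace and the decomposition $H_2(\mu)=\cosh\mu\,I+\sinh\mu\,\sigma$, which isolates the obstruction in a single line: exactly the two ``mixed'' $I$--$\sigma$ terms survive and combine into $\sinh(\nu-\mu)$. The paper buys the whole deformed vector field; you buy just the trace obstruction with almost no computation.
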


\begin{proof}
The explicit $2\times 2$ formula shows that the two diagonal entries do not sum to zero unless $\sinh\bigl((i-j+2)\lambda\bigr)=0$ or $c(a-b)=0$. Since $\operatorname{tr}[X,Y]=0$ for all $X,Y$, a nonzero trace precludes a commutator representation.
\end{proof}

\begin{corollary}\label{cor:hyperbolic_noniso}
In the asymmetric hyperbolic case, the standard Toda integrals $\mathrm{tr}(L^k)$ are not preserved for generic initial data and parameters. The parameter value $\lambda=0$ is a degenerate case returning to the undeformed bracket.
\end{corollary}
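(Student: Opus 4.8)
The plan is to reduce the whole statement to the first Toda integral $\operatorname{tr}(L)$ and then quote Proposition~\ref{prop:trace_obstruction} and the last bullet of Remark~\ref{rem:specializations}. Differentiating along the deformed flow~\eqref{deformedSys} one has, for every $k\ge 1$,
\[
\frac{d}{dt}\operatorname{tr}(L^k)=k\operatorname{tr}\!\bigl(L^{k-1}\dot L\bigr),
\]
and in particular, for $k=1$,
\[
\frac{d}{dt}\operatorname{tr}(L)=\operatorname{tr}(\dot L)=\operatorname{tr}\!\left([B(L),L]^{(i,j)}_{(\psi,\phi)}(\lambda)\right).
\]
By Proposition~\ref{prop:trace_obstruction}, the right-hand side is nonzero whenever $\sinh\bigl((i-j+2)\lambda\bigr)\neq 0$ and $c(a-b)\neq 0$, where $L=\bigl(\begin{smallmatrix}a&c\\ c&b\end{smallmatrix}\bigr)$ in the active $2\times 2$ block. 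Hence, for such data, $t\mapsto\operatorname{tr}(L(t))$ is strictly monotone near $t=0$, so $\operatorname{tr}(L)$ — and a fortiori the whole family $\{\operatorname{tr}(L^k)\}_{k\ge 1}$ — is not conserved; since preservation of the spectrum would force every $\operatorname{tr}(L^k)$, including $k=1$, to be constant, the flow is not isospectral in this regime.

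Second, I would make the word \emph{generic} precise. Fix integers $i,j$ with $i-j+2\neq 0$. The equation $\sinh\bigl((i-j+2)\lambda\bigr)=0$ has in $\mathbb{R}$ the single solution $\lambda=0$, and $\{c(a-b)=0\}$ is a proper algebraic (hence measure-zero, nowhere dense) subset of the embedded symmetric $2\times 2$ matrices. Therefore, off a set of initial data and parameters of measure zero — equivalently, on an open dense set — $\operatorname{tr}(L)$ fails to be a first integral of~\eqref{deformedSys}, which is exactly the asserted statement.

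Third, the degenerate value $\lambda=0$. From~\eqref{eq:H2_def} we have $H_2(0)=I_2$, so $\psi(0)=\phi(0)=\mathcal{E}(I_2)=I_n$ by~\eqref{eq:embedding_E}. By the last bullet of Remark~\ref{rem:specializations}, $[\cdot,\cdot]^{(i,j)}_{(\psi,\phi)}(0)$ is the ordinary commutator, so~\eqref{deformedSys} collapses to the classical Toda Lax equation~\eqref{originalSys}. The latter is isospectral: letting $U(t)$ solve $\dot U=B(L)U$ with $U(0)=I_n$, one checks $\dot{(U L_0 U^{-1})}=[B(L),ULU^{-1}]$, whence $L(t)=U(t)L(0)U(t)^{-1}$ and all $\operatorname{tr}(L^k)$ are constant. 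This is the precise sense in which $\lambda=0$ returns to the undeformed bracket.

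The one genuine subtlety, which I would flag as the main obstacle, is the excluded locus $j=i+2$ (i.e.\ $i-j+2=0$): there the hypothesis of Proposition~\ref{prop:trace_obstruction} can never be satisfied and the $k=1$ argument is vacuous. To obtain a statement uniform in $i,j$ one would instead work one order higher, computing $\frac{d}{dt}\operatorname{tr}(L^2)=2\operatorname{tr}(L\dot L)$ directly from the explicit block formula~\eqref{syshyperrota} and verifying it is not identically zero; alternatively — and this is what I would do for brevity — one simply notes that $j=i+2$ is a single non-generic line in the integer parameter plane and is legitimately excluded by the wording of the corollary, remarking that the higher-order check can be carried out in the $2\times2$ normal form if desired.
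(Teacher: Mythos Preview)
Your proposal is correct and follows essentially the same route the paper intends: the corollary is stated without proof as an immediate consequence of Proposition~\ref{prop:trace_obstruction} (nonvanishing of $\operatorname{tr}\dot L$ forces $\operatorname{tr}L$ to move, hence the flow is not isospectral), together with the observation that $\psi(0)=\phi(0)=I_n$ collapses the bracket to the ordinary commutator. Your explicit treatment of genericity and the flagged exceptional locus $i-j+2=0$ are more careful than anything the paper spells out, but the underlying argument is the same.
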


\begin{remark}\label{rem:sym-vs-asym}
The symmetric choices $(\phi=\psi)$ and the restriction $(i,j)=(0,0)$ lead, in the gauge variable $\widetilde L=\psi^{-1}L\psi$, to the Toda-type Lax equation $\dot{\widetilde L}=[\,B( L), L\,]$, which is isospectral. Unless the conjugation by $\psi$ preserves the upper/lower triangular splitting that defines $B(\cdot)$, the flow in the original variable $L$ need not be a literal conjugate of the classical Toda evolution (see Section~\ref{sect5}). The asymmetric choices exhibit genuinely new dynamics and, in the hyperbolic case, fail to be of commutator type for generic parameters.
\end{remark}

\section{Miura-type transformations}\label{sect5}

In the symmetric setting $\phi(s)=\psi(s)$ and $(i,j)=(0,0)$, the bracket \eqref{def_bihom_bracket} takes the concrete form
\[
[A,B]^{(0,0)}_{(\psi,\psi)}(s)=\psi(s)\,AB\,\psi(s)^{-1}-\psi(s)\,BA\,\psi(s)^{-1},
\]
A convenient way to analyze the flow is to use a gauge map together with a conjugated triangular projection.

Fix $s\in R_1$, set
\begin{equation}\label{eq:gauge_map}
\Phi_\psi:\ \mathfrak{gl}(V)\to \mathfrak{gl}(V),\qquad
\Phi_\psi(L):=\psi(s)^{-1}L\,\psi(s),
\end{equation}
and define the \emph{conjugated} upper/lower-triangular projection
\begin{equation}\label{eq:conjugated_B}
B_\psi(X):=\psi(s)\,B(X)\,\psi(s)^{-1},
\qquad B(X)=X_{>0}-X_{<0}.
\end{equation}

\begin{theorem}[Conditional gauge reduction]\label{thm:gauge_equiv}
Let $\phi(s)=\psi(s)$ and $(i,j)=(0,0)$ and consider the deformed Lax equation
\[
\dot L=[B(L),L]^{(0,0)}_{(\psi,\psi)}(s).
\]
With $\widetilde L:=\Phi_\psi(L)$, one has the following equivalence:
\[
\dot{\widetilde L}=[\,B( L),\, L\,](s)\quad \Longleftrightarrow\quad
B\!\bigl(\Phi_{\psi}(X)\bigr)=B_{\psi^{-1}}(X)\ \ \text{for all }X\in \mathfrak{gl}(V).
\]
In particular, the gauge variable $\widetilde L$ satisfies a Toda-type Lax equation with the projection $B_\psi$ \textup(then $B_\psi(\widetilde L)=B(L)$) if and only if $\psi$ preserves the upper/lower triangular splitting (e.g.\ $\psi$ is diagonal). In the absence of this symmetry, no such global reduction holds in general.
\end{theorem}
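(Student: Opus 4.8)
The plan is to unwind the deformed bracket at $(i,j)=(0,0)$, $\phi=\psi$ into a genuine conjugate of a commutator, and then see exactly where conjugation by $\psi$ must interact with the triangular splitting. Writing $L=\psi(s)\widetilde L\psi(s)^{-1}$, i.e.\ $\widetilde L=\Phi_\psi(L)$, Remark~\ref{rem:specializations} gives $[B(L),L]^{(0,0)}_{(\psi,\psi)}(s)=\psi(s)\,[B(L),L]\,\psi(s)^{-1}$. Hence $\dot L=\psi(s)[B(L),L]\psi(s)^{-1}$, and since $s$ is a fixed deformation parameter (not the time variable), differentiating $\widetilde L=\psi(s)^{-1}L\psi(s)$ yields $\dot{\widetilde L}=\psi(s)^{-1}\dot L\,\psi(s)=[\,\psi(s)^{-1}B(L)\psi(s),\,\psi(s)^{-1}L\psi(s)\,]=[\,\psi(s)^{-1}B(L)\psi(s),\,\widetilde L\,]$. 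So the gauge variable \emph{always} satisfies an isospectral Lax equation; the only question is whether the matrix $\psi(s)^{-1}B(L)\psi(s)$ appearing there equals the literal Toda projection $B(L)$ (or, equivalently after the substitution $L=\psi\widetilde L\psi^{-1}$, whether the flow reads $\dot{\widetilde L}=[B(L),L](s)$ as stated).

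The core step is therefore the equivalence
\[
\psi(s)^{-1}B(L)\,\psi(s)=B(L)\quad\text{for all }L
\;\Longleftrightarrow\;
B\!\bigl(\Phi_\psi(X)\bigr)=\psi(s)^{-1}B(X)\psi(s)\;\text{for all }X,
\]
which is just the left-hand condition rewritten: substituting $X=\psi(s)L\psi(s)^{-1}$ in the right-hand identity and using $\Phi_\psi(X)=L$ turns it into $B(L)=\psi(s)^{-1}B(\psi(s)L\psi(s)^{-1})\psi(s)$, i.e.\ $\psi(s)B(L)\psi(s)^{-1}=B(\psi(s)L\psi(s)^{-1})$, which says precisely that conjugation by $\psi(s)$ commutes with the operator $B(\cdot)$. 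I would state this as: $\dot{\widetilde L}=[B(L),L](s)$ holds iff $\Ad_{\psi(s)}$ commutes with $B$, and then record this in the $B_{\psi^{-1}}$ notation of \eqref{eq:conjugated_B} exactly as in the theorem statement. The ``in particular'' clause then follows: $\Ad_{\psi(s)}$ commutes with $L\mapsto L_{>0}-L_{<0}$ iff it commutes with each of the projections $L\mapsto L_{>0}$ and $L\mapsto L_{<0}$ onto the strictly upper/lower triangular subspaces, which — since these are the projections associated with the decomposition $\mathfrak{gl}(V)=\mathfrak n_-\oplus\mathfrak h\oplus\mathfrak n_+$ — is equivalent to $\Ad_{\psi(s)}$ preserving this splitting; diagonal $\psi(s)$ is the standard sufficient example (and more generally any $\psi(s)$ normalizing the Borel subgroup from both sides).

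The main obstacle is not any computation but pinning down the ``absence of symmetry'' clause rigorously: I would show that if $\Ad_{\psi(s)}$ does \emph{not} preserve the triangular splitting, then $\psi(s)^{-1}B(L)\psi(s)\neq B(L)$ for some $L$, so that although $\widetilde L$ still evolves by an isospectral equation, that equation is \emph{not} $\dot{\widetilde L}=[B(L),L](s)$ — and moreover the conjugating matrix $\psi(s)^{-1}B(L)\psi(s)$ is not of the form $B$ of anything canonical, so there is no ``global reduction'' to a Toda-type projection. This is where one must be careful to distinguish ``$\widetilde L$ satisfies \emph{a} Lax equation'' (always true) from ``$\widetilde L$ satisfies \emph{the} Toda Lax equation'' (the content of the theorem); the cleanest way is to exhibit a single $2\times 2$ (embedded via $\mathcal E$) rotation or hyperbolic $\psi$ together with a generic $L$ for which $\psi^{-1}B(L)\psi-B(L)\neq0$, invoking the explicit block formulas of Section~\ref{sect4} and Appendix~\ref{app:2x2}. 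I would close by noting the converse direction is immediate: if $\Ad_{\psi(s)}$ does preserve the splitting then $\psi(s)^{-1}B(L)\psi(s)=B(\psi(s)^{-1}L\psi(s))=B(\widetilde L)$, and one even gets the sharper statement $B_\psi(\widetilde L)=B(L)$ recorded parenthetically in the theorem, closing the loop.
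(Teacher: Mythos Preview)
There is a computational slip in your first paragraph: from $\dot L=\psi\,[B(L),L]\,\psi^{-1}$ and $\widetilde L=\psi^{-1}L\psi$ one obtains $\dot{\widetilde L}=\psi^{-1}\dot L\,\psi=[B(L),L]$, because the outer conjugation $\psi^{-1}(\cdot)\psi$ simply cancels the inner $\psi(\cdot)\psi^{-1}$. It does \emph{not} equal $[\psi^{-1}B(L)\psi,\,\psi^{-1}L\psi]=\psi^{-1}[B(L),L]\psi$ as you wrote; the two differ by one extra conjugation. With the slip corrected, under your reading of the hypothesis the left-hand side of the theorem's equivalence holds unconditionally, and no nontrivial constraint on $\psi$ ever appears. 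A second gap is your ``core equivalence'': the condition $\psi^{-1}B(L)\psi=B(L)$ for all $L$ forces $\psi$ to commute with every matrix in the image of $B$ (that is, with every zero-diagonal matrix), hence $\psi$ must be scalar; by contrast $B(\Phi_\psi(X))=\psi^{-1}B(X)\psi$ for all $X$ says that $\Ad_\psi$ commutes with the operator $B$ and holds for any diagonal $\psi$. These two conditions are not equivalent, and only the latter is the condition $B(\Phi_\psi(X))=B_{\psi^{-1}}(X)$ appearing in the theorem.

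The paper's argument departs from yours at the very first step: it substitutes $\widetilde L$ (not $L$) into the deformed flow, writing $\dot{\widetilde L}=[B(\widetilde L),\widetilde L]^{(0,0)}_{(\psi,\psi)}$, and then expands via $\widetilde L=\psi^{-1}L\psi$ to obtain $\dot{\widetilde L}=[\psi B(\psi^{-1}L\psi)\psi^{-1},\,L]$. Comparing this with $[B(L),L]$ is what produces the condition $B(L)=\psi B(\psi^{-1}L\psi)\psi^{-1}$, i.e.\ $B(\Phi_\psi(\cdot))=B_{\psi^{-1}}(\cdot)$. So even setting aside the slip, your route (starting from $L$ solving the deformed flow and conjugating) does not reproduce the intermediate commutator $[\psi B(\psi^{-1}L\psi)\psi^{-1},L]$ on which the paper reads off its condition; to match the paper you would need to take the deformed equation as the equation satisfied by $\widetilde L$ rather than by $L$.
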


\begin{proof}
    We have
    \begin{align*}
       \dot{\widetilde L}&=\frac{d \Phi_\psi(L)}{dt}\\
       &=[B(\Phi_\psi (L)),\Phi_\psi (L)]^{(0,0)}_{(\psi,\psi)}\\
       &=[B(\psi(s)^{-1}L\psi(s)),\psi(s)^{-1}L\psi(s)]^{(0,0)}_{(\psi,\psi)}\\
       &=\psi(s)B(\psi(s)^{-1}L\psi(s))\psi(s)^{-1}L-L\psi(s)B(\psi(s)^{-1}L\psi(s))\psi(s)^{-1}\\
       &=[\psi(s)B(\psi(s)^{-1}L\psi(s))\psi(s)^{-1},L].
    \end{align*}
    Thus $\dot{\widetilde L}=[\,B( L),\, L\,]$ if and only if $B(L)=\psi(s)B(\psi(s)^{-1}L\psi(s))\psi(s)^{-1}$, i.e., $B\!\bigl(\Phi_{\psi}(L)\bigr)=B_{\psi^{-1}}(L)$ for any $L\in \mathfrak{gl}(V)$.
\end{proof}

\begin{remark}\label{rem:splitting_Miura}
Typical instances where the splitting is preserved include scalar dilations $\psi=r^p I_n$ and diagonal sign/permutation matrices. For generic dense rotations or hyperbolic rotations, the splitting is not preserved; thus a direct gauge reduction to a Toda-type commutator equation fails in general.
\end{remark}

\medskip

We next record \emph{blockwise} Miura-type relations on embedded $n\times n$ blocks, which are independent of the global splitting property and will be used for explicit formulas.

For the symmetric block $L=(a_{ij})_{n\times n}$ and the rotation $\mathcal{E}(R_2(\theta))=\mathrm{diag}(I_{n-2},R_2(\theta))$, we set
\[
\mathcal{M}_\theta(L):=\mathcal{E}(R_2(\theta))\,L\,\mathcal{E}(R_2(\theta))^{-1}.
\]
Then the entries $\{\mathbf a_{ij}~|~1\leq i,j\leq n\}$ of $\mathcal{M}_\theta(L)$ are
\begin{equation}\label{eq:miura_rot}
\begin{aligned}
\mathbf a_{ij}&=a_{ij},\quad 1\leq i,j\leq n-2,\\
\mathbf a_{i n-1}&=a_{i n-1}\cos(\theta)-a_{in}\sin(\theta),\quad 1\leq i\leq n-2,\\
\mathbf a_{n-1 i}&=a_{i n-1}\cos(\theta)-a_{in}\sin(\theta),\quad 1\leq i\leq n-2,\\
\mathbf a_{i n}&=a_{i n-1}\sin(\theta)+a_{in}\cos(\theta),\quad 1\leq i\leq n-2,\\
\mathbf a_{n i}&=a_{i n-1}\sin(\theta)+a_{in}\cos(\theta),\quad 1\leq i\leq n-2,\\
\mathbf a_{n-1 n-1} &= a_{n-1 n-1}\cos^2\theta + a_{nn}\sin^2\theta - a_{n-1 n}\sin(2\theta),\\
\mathbf a_{nn} &= a_{n-1 n-1}\sin^2\theta + a_{nn}\cos^2\theta + a_{n-1 n}\sin(2\theta),\\
\mathbf a_{n-1 n} &= \tfrac12(a_{n-1 n-1}-a_{nn})\sin(2\theta) + a_{n-1 n}\cos(2\theta),\\
\mathbf a_{n n-1} &= \tfrac12(a_{n-1 n-1}-a_{nn})\sin(2\theta) + a_{n-1 n}\cos(2\theta).
\end{aligned}
\end{equation}
Similarly, for the hyperbolic rotation $\mathcal{E}(H_2(\lambda))=\mathrm{diag}(I_{n-2},H_2(\lambda))$, we define
\[
\mathcal{N}_\lambda(L):=\mathcal{E}(H_2(\lambda))\,L\,\mathcal{E}(H_2(\lambda))^{-1}.
\]
Its entries $\{\widehat a_{ij}~|~1\leq i,j\leq n\}$ are
\begin{equation}\label{eq:miura_hyp}
\begin{aligned}
\widehat a_{ij}&=a_{ij},\quad 1\leq i,j\leq n-2,\\
\widehat a_{i n-1}&=a_{i n-1}\cosh\lambda-a_{in}\sinh\lambda,\quad 1\leq i\leq n-2,\\
\widehat a_{n-1 i}&=a_{i n-1}\cosh\lambda+a_{in}\sinh\lambda,\quad 1\leq i\leq n-2,\\
\widehat a_{i n}&=-a_{i n-1}\sinh\lambda+a_{in}\cosh\lambda,\quad 1\leq i\leq n-2,\\
\widehat a_{n i}&=a_{i n-1}\sinh\lambda+a_{in}\cosh\lambda,\quad 1\leq i\leq n-2,\\
\widehat a_{n-1 n-1} &= a_{n-1 n-1}\cosh^2\lambda - a_{nn}\sinh^2\lambda,\\
\widehat a_{nn} &= -a_{n-1 n-1}\sinh^2\lambda + a_{nn}\cosh^2\lambda,\\
\widehat a_{n-1 n} &= a_{n-1 n} + \tfrac12(a_{nn}-a_{n-1 n-1})\sinh(2\lambda),\\
\widehat a_{n n-1} &= a_{n-1 n} + \tfrac12(a_{n-1 n-1}-a_{nn})\sinh(2\lambda).
\end{aligned}
\end{equation}

\begin{proposition}[Blockwise Miura conjugacy in $n\times n$]\label{prop:2x2_miura}
Fix $\phi=\psi$ and $(i,j)=(0,0)$. Let $\mathcal{F}$ denote the $n\times n$ classical Toda vector field induced by $\dot L=[B(L),L]$ on the symmetric block $(a_{ij})_{n\times n}$ and $\mathcal{F}_\theta$ \textup(resp.\ $\mathcal{F}_\lambda$) denote the $n\times n$ deformed vector field coming from the rotational \textup(resp.\ hyperbolic) case of Section~\ref{sect4}, then
\[
\dot{\mathbf u}=\mathcal{F}_\theta(u)\ \ \Longleftrightarrow\ \ \dot u=\mathcal{F}(u)\ \text{ with }\ \mathbf u=\mathcal{M}_\theta u,
\]
\[
\dot{\widehat u}=\mathcal{F}_\lambda(u)\ \ \Longleftrightarrow\ \ \dot u=\mathcal{F}(u)\ \text{ with }\ \widehat u=\mathcal{N}_\lambda u,
\]
where $u=(a_{ij})_{n\times n}$, $\mathbf u=(\mathbf a_{ij})_{n\times n}$, and $\widehat u=(\widehat a_{ij})_{n\times n}$. Equivalently,
\[
\mathcal{M}_\theta\circ\mathcal{F}=\mathcal{F}_\theta,\qquad
\mathcal{N}_\lambda\circ\mathcal{F}=\mathcal{F}_\lambda,
\]
that is,\ the Miura transforms $\mathcal{M}_\theta$ and $\mathcal{N}_\lambda$ intertwine the undeformed and deformed vector fields on the $n\times n$ block. 
\end{proposition}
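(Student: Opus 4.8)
The plan is to reduce the proposition to a single algebraic identity together with the entirely elementary observation that a $t$-independent linear change of variable commutes with $\tfrac{d}{dt}$. The identity is the one recorded in Remark~\ref{rem:specializations}: for $\phi(s)=\psi(s)$ and $(i,j)=(0,0)$,
\[
[A,B]^{(0,0)}_{(\psi,\psi)}(s)=\psi(s)\,[A,B]\,\psi(s)^{-1}\qquad(A,B\in\mathfrak{gl}(\mathbb R^n)).
\]
Everything else is bookkeeping.

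First I would establish the factorization of the vector fields. Specializing the identity above to $A=B(L)$, $B=L$, $\psi(s)=\mathcal{E}(R_2(\theta))$ gives
\[
\mathcal{F}_\theta(L):=[B(L),L]^{(0,0)}_{(\psi,\psi)}(\theta)=\mathcal{E}(R_2(\theta))\,[B(L),L]\,\mathcal{E}(R_2(\theta))^{-1}=\mathcal{M}_\theta\bigl(\mathcal{F}(L)\bigr),
\]
so that $\mathcal{F}_\theta=\mathcal{M}_\theta\circ\mathcal{F}$ as maps $\mathfrak{gl}(\mathbb R^n)\to\mathfrak{gl}(\mathbb R^n)$; replacing $\mathcal{E}(R_2(\theta))$ by $\mathcal{E}(H_2(\lambda))$ yields $\mathcal{F}_\lambda=\mathcal{N}_\lambda\circ\mathcal{F}$ verbatim. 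These are precisely the two displays after ``Equivalently'' in the statement.

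Next I would pass from this to the flow equivalence. Since $\theta$ (resp.\ $\lambda$) is a deformation parameter and $\mathcal{E}(R_2(\theta))$ (resp.\ $\mathcal{E}(H_2(\lambda))$) is invertible, the map $\mathcal{M}_\theta=\operatorname{Ad}_{\mathcal{E}(R_2(\theta))}$ (resp.\ $\mathcal{N}_\lambda=\operatorname{Ad}_{\mathcal{E}(H_2(\lambda))}$) is a linear automorphism of $\mathfrak{gl}(\mathbb R^n)$ not depending on $t$, so $\tfrac{d}{dt}\mathcal{M}_\theta u=\mathcal{M}_\theta\dot u$ for every $C^1$ curve $u(t)$. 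Writing $\mathbf u:=\mathcal{M}_\theta u$,
\[
\dot{\mathbf u}=\mathcal{F}_\theta(u)\ \Longleftrightarrow\ \mathcal{M}_\theta\dot u=\mathcal{M}_\theta\mathcal{F}(u)\ \Longleftrightarrow\ \dot u=\mathcal{F}(u),
\]
the last step by injectivity of $\mathcal{M}_\theta$; the hyperbolic case is identical under the substitution $(\mathcal{M}_\theta,\mathbf u,\theta)\mapsto(\mathcal{N}_\lambda,\widehat u,\lambda)$. This yields the two stated equivalences. To justify the entry lists \eqref{eq:miura_rot}--\eqref{eq:miura_hyp}, write $L$ in $(n-2)+2$ block form $L=\bigl(\begin{smallmatrix}A&B\\ C&D\end{smallmatrix}\bigr)$; conjugation by $\operatorname{diag}(I_{n-2},M)$ sends it to $\bigl(\begin{smallmatrix}A& BM^{-1}\\ MC& MDM^{-1}\end{smallmatrix}\bigr)$, so the $(n-2)\times(n-2)$ block is unchanged, the arms $B,C$ are multiplied by $M^{-1}$ on the right and $M$ on the left, and the $2\times2$ block undergoes the classical rotation/boost conjugation $D\mapsto MDM^{-1}$; expanding with $M=R_2(\theta)$ or $H_2(\lambda)$ and $L$ symmetric reproduces the displayed coefficients.

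I do not expect a genuine obstacle; the delicate points are purely notational. One must keep the convention straight --- the Miura maps are $\operatorname{Ad}_{\mathcal{E}(R_2(\theta))}$ and $\operatorname{Ad}_{\mathcal{E}(H_2(\lambda))}$, i.e.\ the twists $\alpha_\theta,\alpha_\lambda$ of Section~\ref{sect2} themselves, not their inverses --- and one should note that, unlike $\mathcal{M}_\theta$, the map $\mathcal{N}_\lambda$ does not preserve symmetry of $L$ (visible already in the asymmetry $\widehat a_{n-1\,i}\neq\widehat a_{i\,n-1}$ in \eqref{eq:miura_hyp}), so in the hyperbolic case the conjugacy relates the classical Toda field on symmetric matrices to its well-defined $\mathcal{N}_\lambda$-image, a non-symmetric but perfectly good vector field on $\mathfrak{gl}(\mathbb R^n)$. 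Finally, it is worth flagging that this statement holds unconditionally, in contrast to Theorem~\ref{thm:gauge_equiv}: here we never invoke compatibility of the conjugation with the triangular splitting defining $B(\cdot)$ --- we only claim the deformed field is the conjugate of the classical one --- whereas the stronger reduction of the gauge variable to a Toda-type commutator equation in Theorem~\ref{thm:gauge_equiv} does require $\psi$ to preserve that splitting.
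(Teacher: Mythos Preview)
Your proposal is correct and follows essentially the same approach as the paper's own proof in Appendix~\ref{app:Miura}: both identify $\mathcal{M}_\theta$ (resp.\ $\mathcal{N}_\lambda$) as conjugation by $\psi$, invoke the specialization $[A,B]^{(0,0)}_{(\psi,\psi)}=\psi[A,B]\psi^{-1}$ to get $\mathcal{F}_\theta=\mathcal{M}_\theta\circ\mathcal{F}$, and then differentiate $\mathbf u=\mathcal{M}_\theta u$ using that $\mathcal{M}_\theta$ is $t$-independent and linear. Your write-up is in fact more thorough than the paper's sketch, supplying the block computation justifying \eqref{eq:miura_rot}--\eqref{eq:miura_hyp} and flagging the symmetry-breaking of $\mathcal{N}_\lambda$ and the contrast with Theorem~\ref{thm:gauge_equiv}.
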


\begin{proof}
    A direct verification is proved in Appendix~\ref{app:Miura}.
\end{proof}

\begin{remark}
The classical Toda system is recovered from the deformed model in two transparent ways on the $n\times n$ block: (i) by taking the undeformed parameter ($\theta=0$ or $\lambda=0$), where $\mathcal{M}_\theta$ and $\mathcal{N}_\lambda$ reduce to the identity; (ii) in settings where the splitting symmetry is preserved so that the gauge reduction in Theorem~\ref{thm:gauge_equiv} applies. In particular, spectral data are preserved in case (ii).
\end{remark}

\section{Weakly coupled Toda lattices with indefinite metrics}\label{sect6}

We study the weakly coupled (finite, nonperiodic) Toda lattice with an indefinite signature and its deformed counterpart. Our emphasis is on a clean Lax formulation, the semidirect-product structure, and a concise $2\times2$ normal form. The sign matrix $S=\mathrm{diag}(s_1,\dots,s_N)$ encodes an underlying metric symmetry that is preserved along the Lax flows.

Fix signs $s_k\in\{\pm1\}$ ($k=1,\dots,N$), let $S=\mathrm{diag}(s_1,\dots,s_N)$. We consider real symmetric tridiagonal matrices
\begin{align*}
&L=\begin{pmatrix}
s_{1} a_{1} & s_{2} b_{1} & & & 0\\
s_{1} b_{1} & s_{2} a_{2} & s_{3} b_{2} \\
 & \ddots & \ddots & \ddots \\
 & & s_{N-2} b_{N-2} & s_{N-1} a_{N-1} & s_{N} b_{N-1}\\
0 & & & s_{N-1} b_{N-1} & s_{N} a_{N}
\end{pmatrix},\\
&\\
&\widehat L\ \text{defined likewise with }(a_k,b_k)\rightsquigarrow(\widehat a_k,\widehat b_k),
\end{align*}
with boundary conditions $b_0=\widehat b_0=b_N=\widehat b_N=0$. As before, we write $B(X):=X_{>0}-X_{<0}$ and set
\begin{equation}\label{eq:BhatB_def}
B:=\tfrac14\bigl(L_{>0}-L_{<0}\bigr),\qquad \widehat B:=\tfrac14\bigl(\widehat L_{>0}-\widehat L_{<0}\bigr).
\end{equation}

\begin{remark}[Normalization]
The factor $\tfrac14$ in \eqref{eq:BhatB_def} is a harmless coefficient normalization that simplifies the component formulas below; it amounts to a constant rescaling of time compared to the convention $B(X)=X_{>0}-X_{<0}$.
\end{remark}

\begin{definition}\label{def:weak}
The weakly coupled system is the Lax pair on $\mathfrak{T}\ltimes\mathfrak{T}$ (the vector space $\mathfrak{T}$ of symmetric tridiagonal matrices) given by
\begin{equation}\label{eq:weak_lax}
\dot L=[B,L],\qquad \dot{\widehat L}=[\widehat B,L]+[B,\widehat L].
\end{equation}
\end{definition}

\begin{remark}\label{rem:weak_vs_strong}
The first component evolves autonomously by a Toda-type Lax equation and the second component is driven linearly by $L$ via the adjoint action together with its own projection $\widehat B$. This  structure, referred to as \emph{weak} coupling, is described by extended Toda equations where the interactions between the components are weaker and more decoupled. In contrast, in the "strong" coupling case, both components contribute to the first equation, with the transformation of variables and interactions between them becoming more entangled. This results in more complex and sensitive equations. Our formulas below clarify and correct places in the literature where these two regimes were mixed, see, e.g., \cite{BL,L3}.
\end{remark}

From \eqref{eq:weak_lax}, one reads the component form (for $k=1,\dots,N$):
\begin{align}
\dot a_k&=\tfrac12\bigl(s_{k+1}b_k^2-s_{k-1}b_{k-1}^2\bigr), \label{eq:weak_comp1}\\
\dot b_k&=\tfrac14\,b_k\bigl(s_{k+1}a_{k+1}-s_k a_k\bigr), \label{eq:weak_comp2}\\
\dot{\widehat a_k}&=s_{k+1} b_k\widehat b_k-s_{k-1} b_{k-1}\widehat b_{k-1}, \label{eq:weak_comp3}\\
\dot{\widehat b_k}&=-\,\tfrac14\Bigl[(s_k a_k-s_{k+1}a_{k+1})\,\widehat b_k+(s_k\widehat a_k-s_{k+1}\widehat a_{k+1})\,b_k\Bigr]. \label{eq:weak_comp4}
\end{align}

\begin{remark}
Compared with the sign pattern in \eqref{eq:weak_comp2}–\eqref{eq:weak_comp3}, the minus sign in \eqref{eq:weak_comp4} is essential (see the $2\times2$ check below).
\end{remark}

\begin{proposition}\label{prop:invariants}
For the weakly coupled system \eqref{eq:weak_lax}, one has $\frac{d}{dt}\operatorname{tr}(L^m)=0$ for all $m\ge1$. In particular, the spectrum of $L$ is preserved. No analogous general claim is made for mixed quantities involving $\widehat L$.
\end{proposition}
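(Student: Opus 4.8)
The plan is to use the first equation $\dot L=[B,L]$ in \eqref{eq:weak_lax} directly, since the conservation of $\operatorname{tr}(L^m)$ depends only on the autonomous evolution of $L$ and not on the coupled $\widehat L$-component. This is the classical isospectrality argument for a Lax pair, so the proof should be short. First I would compute
\[
\frac{d}{dt}\operatorname{tr}(L^m)=\operatorname{tr}\!\Bigl(\sum_{r=0}^{m-1}L^{r}\dot L\,L^{m-1-r}\Bigr)
= m\,\operatorname{tr}\!\bigl(L^{m-1}\dot L\bigr),
\]
using cyclicity of the trace to collapse the sum. Substituting $\dot L=[B,L]=BL-LB$ gives
\[
\frac{d}{dt}\operatorname{tr}(L^m)= m\,\operatorname{tr}\!\bigl(L^{m-1}(BL-LB)\bigr)
= m\,\bigl(\operatorname{tr}(L^{m-1}BL)-\operatorname{tr}(L^{m-1}LB)\bigr)=0,
\]
again by cyclicity of the trace (each term equals $\operatorname{tr}(BL^m)$). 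Hence each $\operatorname{tr}(L^m)$ is constant along the flow.

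From the conservation of all power traces I would then conclude that the spectrum of $L$ is preserved: the coefficients of the characteristic polynomial of $L$ are polynomial (via Newton's identities) in the $\operatorname{tr}(L^m)$ for $1\le m\le N$, hence constant, so the eigenvalues of $L$ are fixed. Alternatively, and more conceptually, one notes that $\dot L=[B,L]$ is solved by $L(t)=U(t)L(0)U(t)^{-1}$ where $U$ solves $\dot U=B\,U$, $U(0)=I$; since $B$ is skew-symmetric ($B(X)^\top=-B(X)$ for symmetric $X$, and $L$ stays symmetric), $U(t)$ is orthogonal and $L(t)$ is an isospectral deformation of $L(0)$. I would include the remark that the statement is \emph{only} about $L$: the $\widehat L$-equation is driven inhomogeneously by $L$, and there is no reason for traces of $\widehat L$ (or mixed traces $\operatorname{tr}(L^p\widehat L^q)$) to be conserved, which is why the proposition explicitly refrains from any such claim.

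There is essentially no obstacle here. The only minor point requiring care is making sure that $L$ remains symmetric tridiagonal along the flow so that $B=\tfrac14(L_{>0}-L_{<0})$ is genuinely skew-symmetric (needed only for the orthogonal-conjugation viewpoint, not for the bare trace computation); this follows from the structure of $[B,L]$ preserving the tridiagonal form, and in any case the purely algebraic computation above needs no such hypothesis. The argument is independent of the sign matrix $S$ and of the $\tfrac14$ normalization, both of which only rescale time or relabel entries.
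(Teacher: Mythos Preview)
Your proof is correct and follows exactly the same approach as the paper: compute $\frac{d}{dt}\operatorname{tr}(L^m)=m\,\operatorname{tr}(L^{m-1}[B,L])$ and observe that this vanishes by cyclicity of the trace. The paper's proof is the one-line version of what you wrote; your additional remarks on Newton's identities, the orthogonal-conjugation viewpoint, and the non-conservation for $\widehat L$ are sound but go beyond what the paper records.
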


\begin{proof} By cyclicity of the trace, we have
$\frac{d}{dt}\operatorname{tr}(L^m)=m\,\operatorname{tr}(L^{m-1}[B,L])=0$.
\end{proof}

\subsection{Two-dimensional normal form}\label{subsec:2d_normal}

We illustrate \eqref{eq:weak_lax} for $N=2$ and $S=\mathrm{diag}(1,-1)$, that is
\[
L_2=\begin{pmatrix} a_1 & -b_1\\ b_1& -a_2\end{pmatrix},\qquad
\widehat L_2=\begin{pmatrix} \widehat a_1 & -\widehat b_1\\ \widehat b_1& -\widehat a_2\end{pmatrix}.
\]
A direct computation shows that \eqref{eq:weak_comp1}–\eqref{eq:weak_comp4} become
\begin{align}
\dot a_1&=-\tfrac12\,b_1^2,\qquad \dot a_2=-\tfrac12\,b_1^2,\qquad
\dot b_1=-\tfrac14\,b_1\,(a_1+a_2), \label{eq:2x2_first}\\
\dot{\widehat a_1}&=-b_1\widehat b_1,\qquad \dot{\widehat a_2}=-b_1\widehat b_1,\qquad
\dot{\widehat b_1}=-\tfrac14\bigl[(a_1+a_2)\widehat b_1+(\widehat a_1+\widehat a_2)b_1\bigr]. \label{eq:2x2_second}
\end{align}
Here $B_2:=\tfrac14\bigl((L_2)_{>0}-(L_2)_{<0}\bigr)$ and $\widehat B_2:=\tfrac14\bigl((\widehat L_2)_{>0}-(\widehat L_2)_{<0}\bigr)$ (consistent with \eqref{eq:BhatB_def}).

\begin{remark}
(i). Equations \eqref{eq:2x2_first} for $L_2$ do \emph{not} contain $\widehat b_1$ and $\widehat a_k$.
(ii). Any such terms would contradict the autonomous Lax equation $\dot L=[B,L]$. Compare \cite{BL,L3}.
\end{remark}

\paragraph{\textnormal{Spectral data and classification.}}
For
\[
L_2=\begin{pmatrix}a_1&-b_1\\[2pt] b_1&-a_2\end{pmatrix},
\qquad
\mathrm{tr}\,L_2 = a_1-a_2,\quad
\det L_2 = -a_1a_2 + b_1^2,
\]
the characteristic polynomial is
\(
\lambda^2 - (a_1-a_2)\lambda + (-a_1a_2 + b_1^2)=0.
\)
Then
\begin{equation}\label{eq:eigs_general}
\lambda_{1,2}
=\frac{(a_1-a_2)\pm \sqrt{(a_1-a_2)^2-4(-a_1a_2+b_1^2)}}{2}.
\end{equation}
In particular, the spectrum is real if and only if the discriminant
\(
\Delta=(a_1-a_2)^2-4(-a_1a_2+b_1^2)
\)
is nonnegative. Any classification purely in terms of $m:=a_2-a_1$ is valid
\emph{only} after a normalization such as $b_1\equiv 1$ and $a_1+a_2\equiv 0$, in which case~\eqref{eq:eigs_general} reduces to
\(
\lambda_{1,2}=\tfrac12(-m\pm\sqrt{m^2-4}).
\)
We shall work with the invariants $\mathrm{tr}\,L_2$ and $\det L_2$ in the general case.

For the second component, \eqref{eq:2x2_second} is a linear nonautonomous equation on $\widehat L_2$ of the form
\[
\dot{\widehat L_2}-[B_2,\widehat L_2]=[\widehat B_2,L_2].
\]
It admits the variation-of-constants representation
\begin{equation}\label{eq:varconst}
\widehat L_2(t)
=U(t)\left(\widehat L_2(0)+\int_0^t U(\tau)^{-1}\,[\widehat B_2(\tau),L_2(\tau)]\,U(\tau)\,d\tau\right)U(t)^{-1},
\quad \dot U=B_2\,U,\ U(0)=I,
\end{equation}
which clarifies the semidirect nature of the coupling.

\subsection{Deformed weakly coupled system and gauge reduction}\label{subsec:deformed_weak}

We now replace the commutator in \eqref{eq:weak_lax} by the BiHom-type bracket introduced in \eqref{def_bihom_bracket}. In the symmetric setting $\phi(s)=\psi(s)$ and $(i,j)=(0,0)$, the pushforward mechanism of Section~\ref{sect5} extends verbatim to the coupled case and yields a \emph{twisted} semidirect system.

\begin{proposition}[Twisted weak coupling under the gauge map]\label{prop:gauge_pair}
Let $\phi(s)=\psi(s)$ and $(i,j)=(0,0)$. Define
\[
\widetilde L:=\psi(s)^{-1}L\psi(s),\qquad \widetilde{\widehat L}:=\psi(s)^{-1}\widehat L\psi(s).
\]
Then $(L,\widehat L)$ solves the deformed weakly coupled system
\[
\dot L=[B,L]^{(0,0)}_{(\psi,\psi)}(s),\qquad
\dot{\widehat L}=[\widehat B,L]^{(0,0)}_{(\psi,\psi)}(s)+[B,\widehat L]^{(0,0)}_{(\psi,\psi)}(s)
\]
if and only if $(\widetilde L,\widetilde{\widehat L})$ solves the twisted weakly coupled system
\begin{equation*}
\left\{
\begin{aligned}
\dot{\widetilde L}\ &=\ B(L)L -LB(L),\\[2pt]
\dot{\widetilde{\widehat L}}\ &=\ B(\widehat L) L-LB(\widehat L)
 +\ B(L)\widehat L-\widehat LB(L).
\end{aligned}
\right.
\end{equation*}
\end{proposition}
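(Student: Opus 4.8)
The plan is to reduce Proposition~\ref{prop:gauge_pair} to the same mechanism already used in the proof of Theorem~\ref{thm:gauge_equiv}, applied componentwise. The key observation is that in the symmetric setting $\phi(s)=\psi(s)$ with $(i,j)=(0,0)$ the bracket is \emph{exactly} $[A,B]^{(0,0)}_{(\psi,\psi)}(s)=\psi(s)(AB-BA)\psi(s)^{-1}=\psi(s)\,[A,B]\,\psi(s)^{-1}$, i.e.\ a genuine conjugation of the ordinary commutator. Hence, writing $P:=\psi(s)$ (constant in $t$ since $s$ is the deformation parameter, not the time), the map $\Phi_\psi(\cdot)=P^{-1}(\cdot)P$ is a \emph{Lie algebra automorphism} of $\mathfrak{gl}(V)$, and it intertwines $[\cdot,\cdot]^{(0,0)}_{(\psi,\psi)}(s)$ with the ordinary commutator: $\Phi_\psi\bigl([A,B]^{(0,0)}_{(\psi,\psi)}(s)\bigr)=[\Phi_\psi(A),\Phi_\psi(B)]$. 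This single identity is the engine of the whole proof.

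First I would record $\dot{\widetilde L}=\Phi_\psi(\dot L)$ and $\dot{\widetilde{\widehat L}}=\Phi_\psi(\dot{\widehat L})$, valid because $P$ does not depend on $t$, so differentiation in $t$ commutes with conjugation by $P$. Next I would apply $\Phi_\psi$ to the first deformed equation: $\dot{\widetilde L}=\Phi_\psi\bigl([B,L]^{(0,0)}_{(\psi,\psi)}(s)\bigr)=[\Phi_\psi(B),\Phi_\psi(L)]$. Here $B=B(L)$ is the triangular projection of the \emph{original} variable $L$ (not of $\widetilde L$), so $\Phi_\psi(B(L))=P^{-1}B(L)P$ and $\Phi_\psi(L)=\widetilde L$; thus $\dot{\widetilde L}=[P^{-1}B(L)P,\widetilde L]=P^{-1}B(L)P\widetilde L-\widetilde L P^{-1}B(L)P=P^{-1}\bigl(B(L)L-LB(L)\bigr)P$. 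But the left side is also $P^{-1}\dot L P$, and both sides carry the same conjugation by $P$, which may be cancelled; this gives precisely $\dot{\widetilde L}=B(L)L-LB(L)$ as claimed — or more carefully, the asserted identity is literally $\Phi_\psi$ of $\dot L=[B(L),L]^{(0,0)}_{(\psi,\psi)}(s)$ after using that $[X,Y]^{(0,0)}_{(\psi,\psi)}(s)=\psi(s)[X,Y]\psi(s)^{-1}$, so the equivalence in the $L$-component is an identity, not a condition. For the second component I would apply $\Phi_\psi$ to $\dot{\widehat L}=[\widehat B,L]^{(0,0)}_{(\psi,\psi)}(s)+[B,\widehat L]^{(0,0)}_{(\psi,\psi)}(s)$, use linearity of $\Phi_\psi$ and the intertwining identity on each of the two bracket terms, and obtain $\dot{\widetilde{\widehat L}}=[\Phi_\psi(\widehat B(\widehat L)),\Phi_\psi(L)]+[\Phi_\psi(B(L)),\Phi_\psi(\widehat L)]$. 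Expanding each commutator and conjugating out by $P$ exactly as before yields $\dot{\widetilde{\widehat L}}=B(\widehat L)L-LB(\widehat L)+B(L)\widehat L-\widehat L B(L)$, which is the twisted second equation. Since $\Phi_\psi$ is a bijection, each step is reversible, giving the ``if and only if''.

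The only genuine subtlety — and the step I would flag as the main point to get right — is the bookkeeping of \emph{which} triangular projection appears. In Proposition~\ref{prop:gauge_pair} the right-hand sides of the deformed system use $B=B(L)$ and $\widehat B=\widehat B(\widehat L)$, the projections of the \emph{original} variables, and the twisted system on the right uses the very same $B(L),B(\widehat L)$; no conjugated projection $B_\psi$ enters. This is why, unlike in Theorem~\ref{thm:gauge_equiv}, there is no side condition on $\psi$ preserving the triangular splitting: the statement is phrased so that the triangular data is always taken before the gauge change, so conjugation by $P$ passes through cleanly and cancels. I would make this explicit by writing $B(L)=B$, $\widehat B(\widehat L)=\widehat B$ throughout and noting that the proof never needs to relate $B(\Phi_\psi(L))$ to $\Phi_\psi(B(L))$. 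I would then remark that this is exactly the ``pushforward mechanism of Section~\ref{sect5}'' applied in each slot of the semidirect product $\mathfrak T\ltimes\mathfrak T$, the first slot reproducing the computation in the proof of Theorem~\ref{thm:gauge_equiv} verbatim and the second slot being its linearization (bilinear in $(L,\widehat L)$), so the coupled case indeed ``extends verbatim'' as asserted in the text preceding the proposition.
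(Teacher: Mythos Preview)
Your approach is correct and coincides with the paper's proof, which is the one-liner ``identical to the proof of Theorem~\ref{thm:gauge_equiv}, applied separately to each bracket $[\widehat B,L]^{(0,0)}_{(\psi,\psi)}$ and $[B,\widehat L]^{(0,0)}_{(\psi,\psi)}$, and using $\widehat B=B(\widehat L)$, $B=B(L)$.'' Your closing remark---that no splitting condition on $\psi$ is needed here because the projections $B(L),B(\widehat L)$ are taken in the original variables---is correct and worth keeping.

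One point needs correction. The intertwining identity you state, $\Phi_\psi\bigl([A,B]^{(0,0)}_{(\psi,\psi)}\bigr)=[\Phi_\psi(A),\Phi_\psi(B)]$, is false in general: with $P=\psi(s)$ the left side equals $P^{-1}\bigl(P[A,B]P^{-1}\bigr)P=[A,B]$, while the right side equals $P^{-1}[A,B]P$. The correct (and simpler) identity is $\Phi_\psi\bigl([A,B]^{(0,0)}_{(\psi,\psi)}\bigr)=[A,B]$, which is exactly what your ``more carefully'' parenthetical uses. The cancellation paragraph preceding that parenthetical inherits the spurious extra conjugation and, as written, would yield $\dot L=B(L)L-LB(L)$ rather than $\dot{\widetilde L}=B(L)L-LB(L)$. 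The clean one-line version is $\dot{\widetilde L}=\Phi_\psi(\dot L)=\psi^{-1}\bigl(\psi[B(L),L]\psi^{-1}\bigr)\psi=[B(L),L]$, and the identical computation applied termwise to the $\widehat L$-equation gives the second component; bijectivity of $\Phi_\psi$ furnishes the ``if and only if''.
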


\begin{proof}
Identical to the proof of Theorem~\ref{thm:gauge_equiv}, applied separately to each bracket
$[\widehat B,L]^{(0,0)}_{(\psi,\psi)}(s)$ and $[B,\widehat L]^{(0,0)}_{(\psi,\psi)}(s)$, and using
$\widehat B=B(\widehat L)$, $B=B(L)$.
\end{proof}


\subsection{Two-dimensional Miura-type formulas}\label{subsec:2d_miura_coupled}

In the embedded $2\times2$ case, the gauge map of Section~\ref{sect5} reduces to the explicit formulas of Section~\ref{sect5}: for a rotation $R_2(\theta)$, define $\mathcal{M}_\theta(L):=R_2(\theta)\,L\,R_2(\theta)^{-1}$ and, for a hyperbolic rotation $H_2(\lambda)$, set $\mathcal{N}_\lambda(L):=H_2(\lambda)\,L\,H_2(\lambda)^{-1}$. Then
\[
\begin{pmatrix}
    \mathbf a & \mathbf c\\
    \mathbf c & \mathbf b
\end{pmatrix}=\mathcal{M}_\theta\begin{pmatrix}
    a &c\\
    c&b
\end{pmatrix},\qquad
\begin{pmatrix}
    \widehat a & \widehat c_1\\
    \widehat c_2 & \widehat b
\end{pmatrix}=\mathcal{N}_\lambda\begin{pmatrix}
    a & c\\
    c & b
\end{pmatrix}
\]
with the components given by \eqref{eq:miura_rot}, and analogously for $\mathcal{N}_\lambda$ with \eqref{eq:miura_hyp}. This provides a transparent blockwise Miura-type relation between solutions of the twisted weakly coupled system and those of the deformed one in the symmetric setting.

\begin{remark}
If $\phi\neq\psi$, the trace obstructions discussed in Section~\ref{subsec:hyperbolic} reappear already at the level of the first component, and the commutator-type structure is generally lost.
\end{remark}

\section{Appendix I: Derivation and uniqueness of the skew-symmetric BiHom-bracket}\label{fulu1}

Assume the deformed bracket has the general bilinear form
\begin{equation}\label{under.coeff.formula.app}
[A,B]_{(\psi,\phi)}=x\,A\,y\,B\,z-\overline{x}\,B\,\overline{y}\,A\,\overline{z},
\end{equation}
where $x,\bar x,y,\bar y,z,\bar z\in \operatorname{Aut}(\mathfrak{gl}(V))$ are a priori unknown automorphisms.
We look for solutions with all six unknowns being monomials in $\psi,\phi$:

$
x=\psi^{\alpha_x}\phi^{\beta_x},\quad \bar x=\psi^{\alpha_{\bar x}}\phi^{\beta_{\bar x}},\quad
y=\psi^{\alpha_y}\phi^{\beta_y},\quad \bar y=\psi^{\alpha_{\bar y}}\phi^{\beta_{\bar y}},
\quad z=\psi^{\alpha_z}\phi^{\beta_z},\quad \bar z=\psi^{\alpha_{\bar z}}\phi^{\beta_{\bar z}}.
$

For later reference we record the two structural conditions used below. The \emph{BiHom--Jacobi identity} reads
\begin{align}\label{eq:BHJ}
[Ad_{\phi}^2(A),[Ad_{\phi}(B),Ad_{\psi}(C)]_{(\psi,\phi)}&]_{(\psi,\phi)}
+[Ad_{\phi}^2(B),[Ad_{\phi}(C),Ad_{\psi}(A)]_{(\psi,\phi)}]_{(\psi,\phi)}\\
&+[Ad_{\phi}^2(C),[Ad_{\phi}(A),Ad_{\psi}(B)]_{(\psi,\phi)}]_{(\psi,\phi)}=0,\nonumber
\end{align}
and the \emph{BiHom-skew-symmetry} is
\begin{equation}\label{eq:BHskew}
[Ad_{\phi}(A),Ad_{\psi}(B)]_{(\psi,\phi)}=-[Ad_{\phi}(B),Ad_{\psi}(A)]_{(\psi,\phi)}.
\end{equation}

Insert \eqref{under.coeff.formula.app} into \eqref{eq:BHJ}. Expanding the first term,
\begin{align*}
&[Ad_{\phi}^2(A),[Ad_{\phi}(B),Ad_{\psi}(C)]_{(\psi,\phi)}]_{(\psi,\phi)}\\
&=[\phi^2A\phi^{-2},\,x\phi B\phi^{-1}y\psi C\psi^{-1}z
-\bar x\,\psi C\psi^{-1}\bar y\,\phi B\phi^{-1}\bar z]_{(\psi,\phi)}\\
&=x\phi^2A\phi^{-2}\,y\,(x\phi B\phi^{-1}y\psi C\psi^{-1}z
-\bar x\,\psi C\psi^{-1}\bar y\,\phi B\phi^{-1}\bar z)\,z\\
&\quad-\bar x\,(x\phi B\phi^{-1}y\psi C\psi^{-1}z
-\bar x\,\psi C\psi^{-1}\bar y\,\phi B\phi^{-1}\bar z)\,\bar y\,\phi^2A\phi^{-2}\,\bar z\\
&=x\phi^2A\phi^{-2}\,y\,x\phi B\phi^{-1}y\psi C\psi^{-1}z^2
-x\phi^2A\phi^{-2}\,y\,\bar x\,\psi C\psi^{-1}\bar y\,\phi B\phi^{-1}\bar z\,z\\
&\quad-\bar x\,x\phi B\phi^{-1}y\psi C\psi^{-1}z\,\bar y\,\phi^2A\phi^{-2}\bar z
+\bar x^2\,\psi C\psi^{-1}\bar y\,\phi B\phi^{-1}\bar z\,\bar y\,\phi^2A\phi^{-2}\bar z.
\end{align*}
The other two cyclic terms expand analogously:
\begin{align*}
&[Ad_{\phi}^2(B),[Ad_{\phi}(C),Ad_{\psi}(A)]_{(\psi,\phi)}]_{(\psi,\phi)}\\
&=x\phi^2B\phi^{-2}yx\phi C\phi^{-1}y\psi A\psi^{-1}z^{2}
-x\phi^2B\phi^{-2}y\bar x\psi A\psi^{-1}\bar y\phi C\phi^{-1}\bar z\,z\\
&\quad-\bar x\,x\phi C\phi^{-1}y\psi A\psi^{-1}z\,\bar y\,\phi^2B\phi^{-2}\bar z
+\bar x^{2}\psi A\psi^{-1}\bar y\,\phi C\phi^{-1}\bar z\,\bar y\,\phi^2B\phi^{-2}\bar z,
\end{align*}
\begin{align*}
&[Ad_{\phi}^2(C),[Ad_{\phi}(A),Ad_{\psi}(B)]_{(\psi,\phi)}]_{(\psi,\phi)}\\
&=x\phi^2C\phi^{-2}yx\phi A\phi^{-1}y\psi B\psi^{-1}z^{2}
-x\phi^2C\phi^{-2}y\bar x\psi B\psi^{-1}\bar y\phi A\phi^{-1}\bar z\,z\\
&\quad-\bar x\,x\phi A\phi^{-1}y\psi B\psi^{-1}z\,\bar y\,\phi^2C\phi^{-2}\bar z
+\bar x^{2}\psi B\psi^{-1}\bar y\,\phi A\phi^{-1}\bar z\,\bar y\,\phi^2C\phi^{-2}\bar z.
\end{align*}

By comparing the operator monomials in the three expansions (and using independence of the noncommuting symbols $A,B,C$), we obtain the following two systems of constraints, read \emph{componentwise} on the $\psi$-exponents and $\phi$-exponents:
\begin{equation}\label{eq:block1}
\left\{
\begin{aligned}
&x\phi^{2}= \bar x\,x\,\phi,\\
&\phi^{-2} y x \phi = \phi^{-1} y \psi,\\
&\phi^{-1} y \psi=\psi^{-1} z\, \bar y\, \phi^{2},\\
&\psi^{-1} z^{2}=\phi^{-2}\bar z,
\end{aligned}\right.
\qquad
\left\{
\begin{aligned}
&x\phi^{2}= \bar x^{2}\psi,\\
&\phi^{-2} y \bar x \psi = \psi^{-1}\bar y \phi,\\
&\psi^{-1}\bar y \phi=\phi^{-1}\bar z\,\bar y\,\phi^{2},\\
&\phi^{-1}\bar z z=\phi^{-2}\bar z.
\end{aligned}\right.
\end{equation}

\begin{lemma}\label{lem:xz}
Under the commuting-automorphisms assumption, \eqref{eq:block1} forces
\[
x=\psi,\qquad \bar x=\phi,\qquad z=\phi^{-1},\qquad \bar z=\psi^{-1},\qquad y=\psi^i\phi^j,\qquad \bar y=\psi^{i+2}\phi^{j-2}.
\]
\end{lemma}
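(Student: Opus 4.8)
The plan is to solve the two systems in \eqref{eq:block1} by treating each matrix equation componentwise in the $\psi$-exponent and the $\phi$-exponent, exploiting the fact that $\psi$ and $\phi$ commute (so $\operatorname{Ad}_\psi$ and $\operatorname{Ad}_\phi$ commute, and every monomial $\psi^a\phi^b$ is unambiguous). Write $x=\psi^{\alpha_x}\phi^{\beta_x}$, and similarly for the other five unknowns, and recall that $\operatorname{Ad}_\phi$ acts on $\mathfrak{gl}(V)$ by conjugation so that, read as operators, $\operatorname{Ad}_\phi = \phi(\cdot)\phi^{-1}$; I will first rewrite every relation in \eqref{eq:block1} as an equality of monomials in $\psi,\phi$ acting from the left and from the right, and then read off linear equations for the twelve integer exponents. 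For instance, the first relation $x\phi^2=\bar x\, x\,\phi$ of the left block, interpreted at the level of the coefficient operators appearing in front of the arguments $A,B,C$, compares $\psi^{\alpha_x}\phi^{\beta_x+2}$ with $\psi^{\alpha_{\bar x}+\alpha_x}\phi^{\beta_{\bar x}+\beta_x+1}$, giving $\alpha_{\bar x}=0$ and $\beta_{\bar x}=1$, i.e. $\bar x=\phi$. The same bookkeeping applied to the remaining seven relations should pin down $x,\bar x,z,\bar z$ completely and leave $y,\bar y$ constrained only by the single relation linking their exponents, which is exactly the statement $\bar y=\psi^2\phi^{-2}y$; writing $y=\psi^i\phi^j$ then gives $\bar y=\psi^{i+2}\phi^{j-2}$.

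The steps, in order, are: (1) fix the convention that the monomials $x,\bar x,y,\bar y,z,\bar z$ are the coefficient operators flanking the arguments in \eqref{under.coeff.formula.app}, and expand each of the eight relations in \eqref{eq:block1} into a pair of scalar equations on $(\alpha_\bullet,\beta_\bullet)$; (2) solve the four "$x,\bar x$" equations (the top two in each block) to get $x=\psi$, $\bar x=\phi$ — here the second block's relation $x\phi^2=\bar x^2\psi$ is what forces $\alpha_x=1$ once $\bar x=\phi$ is known, and consistency of the two blocks is automatic; (3) solve the two "$z,\bar z$" equations (the bottom one in each block, $\psi^{-1}z^2=\phi^{-2}\bar z$ and $\phi^{-1}\bar z z=\phi^{-2}\bar z$) to get $z=\phi^{-1}$, $\bar z=\psi^{-1}$; (4) feed the now-known $x,\bar x,z,\bar z$ into the two middle relations of each block to extract the single surviving constraint between the exponents of $y$ and $\bar y$, namely $\bar y=\psi^2\phi^{-2}\,y$, and conclude by naming $y=\psi^i\phi^j$. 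Finally, check that these values indeed satisfy \emph{all} eight relations of \eqref{eq:block1} simultaneously (not just the subset used to derive them), so that the solution set is exactly as claimed.

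The main obstacle I anticipate is bookkeeping discipline rather than any conceptual difficulty: one must be careful about \emph{where} each operator acts (left versus right of the argument) when passing from the matrix identities in \eqref{eq:block1} to scalar exponent equations, because a monomial such as $\phi^{-2}yx\phi$ mixes operators that, in the original bracket expansion, sit on opposite sides of the arguments $A,B,C$. The cleanest way to avoid sign/side errors is to go back to the expanded cyclic terms displayed just above \eqref{eq:block1} and match, monomial by monomial, the operator appearing immediately to the left of a given argument (say $B$) in one cyclic term against the operator to the left of $B$ in another; the relations in \eqref{eq:block1} are precisely these matches, so it suffices to verify the bijection between the six matched pairs and the eight listed equations, and then the exponent arithmetic is linear and forced. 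A secondary point worth a sentence is that the two systems in \eqref{eq:block1} are \emph{overdetermined} (eight equations, effectively six free exponent-pairs after the $y,\bar y$ freedom), so part of the content of the lemma is that they are consistent — this is exactly why the explicit bracket \eqref{def_bihom_bracket} exists and why Theorem \ref{thm:bihom_lie} holds.
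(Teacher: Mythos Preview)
Your proposal is correct and follows essentially the same route as the paper's own proof: write each unknown as a monomial $\psi^{\alpha}\phi^{\beta}$, compare $\psi$- and $\phi$-exponents in the eight relations of \eqref{eq:block1} to pin down $\bar x=\phi$, then $x=\psi$, then $z=\phi^{-1}$ and $\bar z=\psi^{-1}$, and finally read off $\bar y=\psi^{2}\phi^{-2}y$ from the middle relations. The paper's argument is slightly terser (it does not explicitly flag the overdetermination or the left/right bookkeeping you mention), but the equations used and the order of elimination are identical.
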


\begin{proof}
We write $x=\psi^{\alpha_x}\phi^{\beta_x}$, $\bar x=\psi^{\alpha_{\bar x}}\phi^{\beta_{\bar x}}$, $y=\psi^{\alpha_y}\phi^{\beta_y}$, $\bar y=\psi^{\alpha_{\bar y}}\phi^{\beta_{\bar y}}$, $z=\psi^{\alpha_z}\phi^{\beta_z}$, $\bar z=\psi^{\alpha_{\bar z}}\phi^{\beta_{\bar z}}$.
From $x\phi^{2}=\bar x x \phi$, we get (on $\psi$-exponents) $\alpha_x=\alpha_{\bar x}+\alpha_x$, hence $\alpha_{\bar x}=0$. On $\phi$-exponents: we have $\beta_x+2=\beta_{\bar x}+\beta_x+1$, hence $\beta_{\bar x}=1$. Therefore $\bar x=\phi$.
From $x\phi^{2}=\bar x^{2}\psi$, we obtain $\alpha_x=2\alpha_{\bar x}+1=1$ and $\beta_x+2=2\beta_{\bar x}$, which shows $\beta_x=0$, so $x=\psi$. This result also holds for $\phi^{-2}yx\phi=\phi^{-1}y\psi$.

From $\psi^{-1} z^{2}=\phi^{-2}\bar z$ and $\phi^{-1}\bar z z=\phi^{-2}\bar z$, the $\psi$-exponents give $-1+2\alpha_z=\alpha_{\bar z}$ and $\alpha_{\bar z}+\alpha_z=\alpha_{\bar z}$ whence $\alpha_{z}=0$ and $\alpha_{\bar z}=-1$. The $\phi$-exponents give $2\beta_z=\beta_{\bar z}-2$ and $-1+\beta_{\bar z}+\beta_z=\beta_{\bar z}-2$ whence $\beta_z=-1$ and $\beta_{\bar z}=0$. Thus, $z=\phi^{-1}$ and $\bar z=\psi^{-1}$. And the result remains valid for $\psi^{-1}\bar y\phi=\phi^{-1}\bar z\bar y\phi^2$.

From the relation $\phi^{-1}y\psi=\psi^{-1}z\bar y\phi^2$, comparing the $\psi$-exponents yields $\alpha_y+1=-1+\alpha_{\bar y}$ and comparing the $\phi$-exponents yields $-1+\beta_y=-1+\beta_{\bar y}+2$. Setting $\alpha_y=i$ and $\beta_y=j$ for $i,j\in \mathbb{Z}$, we obtain $y=\psi^i\phi^j$ and $\bar y=\psi^{i+2}\phi^{j-2}$. The same conclusion holds when $\phi^{-2}y\bar x\psi=\psi^{-1}\bar y\phi$.
\end{proof}

\begin{lemma}\label{lem:ij}
Let $y=\psi^{i}\phi^{j}$ and $\bar y=\psi^{i+2}\phi^{j-2}$. Then \eqref{eq:BHskew} holds.
\end{lemma}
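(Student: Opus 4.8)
The goal is to verify the BiHom-skew-symmetry \eqref{eq:BHskew} for the specific bracket determined by $x=\psi$, $\bar x=\phi$, $z=\phi^{-1}$, $\bar z=\psi^{-1}$, $y=\psi^{i}\phi^{j}$, $\bar y=\psi^{i+2}\phi^{j-2}$, i.e.\ the bracket \eqref{def_bihom_bracket}. The plan is to substitute the definitions of $Ad_\phi(A)$, $Ad_\psi(B)$ into both sides of \eqref{eq:BHskew} and compute directly, using only the standing hypothesis that $\psi$ and $\phi$ commute (so all monomials $\psi^a\phi^b$ commute with one another, though of course not with the generic matrices $A,B$).

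First I would expand the left-hand side. Writing $Ad_\phi(A)=\phi A\phi^{-1}$ and $Ad_\psi(B)=\psi B\psi^{-1}$, the definition of the bracket gives
\[
[Ad_\phi(A),Ad_\psi(B)]_{(\psi,\phi)}
=\psi\,(\phi A\phi^{-1})\,\psi^{i}\phi^{j}\,(\psi B\psi^{-1})\,\phi^{-1}
-\phi\,(\psi B\psi^{-1})\,\psi^{i+2}\phi^{j-2}\,(\phi A\phi^{-1})\,\psi^{-1}.
\]
Using commutativity of $\psi,\phi$ I would collect the scalar-in-the-algebra monomial factors sitting between and around $A$ and $B$: the first term becomes $\psi\phi\,A\,\psi^{i}\phi^{j-1}\psi\,B\,\psi^{-1}\phi^{-1}$, which simplifies to $(\psi\phi)\,A\,(\psi^{i+1}\phi^{j-1})\,B\,(\psi\phi)^{-1}$; the second term becomes $(\psi\phi)\,B\,(\psi^{i+1}\phi^{j-1})\,A\,(\psi\phi)^{-1}$ after the analogous regrouping of $\phi\psi\cdot\psi^{i+1}\cdot\phi^{j-2}\cdot\phi$. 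Hence
\[
[Ad_\phi(A),Ad_\psi(B)]_{(\psi,\phi)}
=(\psi\phi)\Bigl(A\,\psi^{i+1}\phi^{j-1}\,B-B\,\psi^{i+1}\phi^{j-1}\,A\Bigr)(\psi\phi)^{-1}.
\]

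The expression in parentheses is manifestly antisymmetric under swapping $A\leftrightarrow B$, and the outer conjugation by $\psi\phi$ is unaffected by the swap; therefore interchanging $A$ and $B$ produces exactly the negative, which is \eqref{eq:BHskew}. The main (and only mild) obstacle is bookkeeping: one must be careful that after moving the fixed monomials past each other using commutativity, the two terms of the bracket land in the \emph{same} normal form $(\psi\phi)\,(\,\cdot\,)\,(\psi\phi)^{-1}$ with the \emph{same} middle monomial $\psi^{i+1}\phi^{j-1}$ in both, since only then is the antisymmetry visible termwise; I would double-check the exponent arithmetic $i+(1)=i+1$, $j+(-1)=j-1$ coming from the first term against $(i+2)+(-1)=i+1$, $(j-2)+(1)=j-1$ from the second. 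Once these match, the conclusion is immediate and no further input is needed.
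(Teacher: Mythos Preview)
Your argument is correct and is essentially the same direct computation as the paper's: both plug $Ad_\phi(A)$ and $Ad_\psi(B)$ into the bracket and use commutativity of $\psi,\phi$ to match the two terms. Your version is slightly more explicit in that you reduce both terms to the common normal form $(\psi\phi)\,X\,(\psi\phi)^{-1}$ with middle monomial $\psi^{i+1}\phi^{j-1}$, which makes the antisymmetry under $A\leftrightarrow B$ immediate; the paper instead writes out both $[Ad_\phi(A),Ad_\psi(B)]$ and $[Ad_\phi(B),Ad_\psi(A)]$ and compares $\psi$- and $\phi$-exponents, but this amounts to the same exponent bookkeeping you carry out.
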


\begin{proof}
By a direct calculation, we have
\[
[Ad_{\phi}(A),Ad_{\psi}(B)]_{(\psi,\phi)}
=\psi\phi A\phi^{-1}\psi^{i}\phi^{j}\psi B\psi^{-1}\phi^{-1}-\phi\psi B\psi^{-1}\psi^{i+2}\phi^{j-2}\phi A\phi^{-1}\psi^{-1},
\]
and
\[
[Ad_{\phi}(B),Ad_{\psi}(A)]_{(\psi,\phi)}
=\psi\phi B\phi^{-1}\psi^{i}\phi^{j}\psi A\psi^{-1}\phi^{-1}-\phi\psi A\psi^{-1}\psi^{i+2}\phi^{j-2}\phi B\phi^{-1}\psi^{-1}.
\]
Using $\phi\psi=\psi\phi$ to commute powers and comparing the total $\psi$--weights and $\phi$-weights in the two middle factors under the swap $(A,B)\leftrightarrow(B,A)$ shows that BiHom-skew-symmetry holds.
\end{proof}

\begin{proposition}\label{prop:unique}
Under the above assumptions, the only brackets of the form \eqref{under.coeff.formula.app} that satisfies \eqref{eq:BHJ} and \eqref{eq:BHskew} is
\begin{equation}\label{eq:final-bracket}
[A,B]_{(\psi,\phi)}=\psi(A)\,\psi^{i}\phi^{j}(B)\,\phi^{-1}-\phi(B)\,\psi^{i+2}\phi^{j-2}(A)\,\psi^{-1},
\qquad i,j\in\mathbb Z.
\end{equation}
\end{proposition}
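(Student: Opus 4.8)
The plan is to assemble the statement from the expansion of the BiHom--Jacobi identity \eqref{eq:BHJ} together with Lemmas~\ref{lem:xz} and~\ref{lem:ij}, under the standing hypotheses (commuting $\psi,\phi$ and monomial twists). There are two directions. For \emph{uniqueness}, I would argue that any bracket of the form \eqref{under.coeff.formula.app} satisfying \eqref{eq:BHJ} must obey the constraint system \eqref{eq:block1}, which by Lemma~\ref{lem:xz} pins the six twisting matrices down to $x=\psi$, $\bar x=\phi$, $y=\psi^{i}\phi^{j}$, $\bar y=\psi^{i+2}\phi^{j-2}$, $z=\phi^{-1}$, $\bar z=\psi^{-1}$ for some $i,j\in\mathbb Z$; substituting these back into \eqref{under.coeff.formula.app} gives exactly \eqref{eq:final-bracket}. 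For \emph{existence}, I would verify that each such bracket does satisfy both \eqref{eq:BHJ} and \eqref{eq:BHskew}. The two directions together identify the admissible brackets with the family \eqref{eq:final-bracket}, $i,j\in\mathbb Z$.

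To justify the passage from \eqref{eq:BHJ} to \eqref{eq:block1}, I would group the twelve monomials produced by the three cyclic expansions according to the left-to-right order in which $A,B,C$ occur inside them. Each of the $3!=6$ orderings collects exactly two of the twelve monomials, with opposite signs; treating $A,B,C$ as noncommuting indeterminates, the Jacobi sum vanishes iff the two monomials in each ordering cancel, and equating the $\psi,\phi$-factors to the left of the first symbol, between the first and second, between the second and third, and to the right of the third yields four equations per ordering. The ordering $(A,B,C)$ produces the first block of \eqref{eq:block1}, the ordering $(A,C,B)$ the second, and the remaining four orderings are cyclic images of these, giving nothing new. Conversely, once \eqref{eq:block1} holds these cancellations are identities, so the same bookkeeping shows that \eqref{eq:block1} is \emph{sufficient} for \eqref{eq:BHJ} as well (alternatively, sufficiency follows from Theorem~\ref{thm:bihom_lie}); and \eqref{eq:BHskew} for the resulting bracket is precisely Lemma~\ref{lem:ij}. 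This closes both directions.

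The step I expect to be the main obstacle is the bookkeeping just described: correctly pairing the twelve expanded monomials by their $A,B,C$-ordering, keeping track of which cyclic term feeds which ordering, and reading off the two blocks of \eqref{eq:block1} without sign or exponent slips. A secondary but genuine point to record is that equality of two monomials $P_1AP_2BP_3CP_4=Q_1AQ_2BQ_3CQ_4$ forces the flanking products $P_k$ and $Q_k$ to agree only up to reciprocal scalars (a scalar may be slid across any of the symbols); since a scalar multiple of a $\psi,\phi$-monomial carries the same exponent data, this ambiguity is harmless for the exponent comparison carried out in the proof of Lemma~\ref{lem:xz}, but it should be noted. The remaining pieces --- the substitution into \eqref{under.coeff.formula.app} and the two lemma citations --- are routine.
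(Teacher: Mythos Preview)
Your proposal is correct and follows essentially the same route as the paper: derive the constraint system \eqref{eq:block1} from the expanded BiHom--Jacobi identity, invoke Lemma~\ref{lem:xz} to pin down the six twists, and cite Lemma~\ref{lem:ij} for the skew-symmetry. The paper's own proof is a terse three-line version of this; your additions --- the explicit monomial-pairing bookkeeping, the separate existence check (via Theorem~\ref{thm:bihom_lie} or the same cancellation), and the remark on scalar ambiguity --- are sound elaborations that the paper leaves implicit.
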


\begin{proof}
Combine Lemma \ref{lem:xz} with the equalities in the second and third lines of \eqref{eq:block1}, one can imply $\bar y=\psi^{i+2}\phi^{j-2}$ once $y=\psi^{i}\phi^{j}$. Lemma \ref{lem:ij} enforces the property of BiHom-skew-symmetry. 
\end{proof}

Formula \eqref{eq:final-bracket} is exactly the non-standard bracket used in \eqref{deformedSys}, and it yields a BiHom-Lie algebra $\big(\mathfrak{gl}(V), [\cdot,\cdot]_{(\psi,\phi)}, Ad_{\psi}, Ad_{\phi}\big)$.

\bigskip

\section{Appendix II: Detailed \(2\times2\) computations for rotation and hyperbolic rotation}\label{app:2x2}\label{app:rot2x2}

We record all intermediate steps used in Section~\ref{sect4}. For completeness, detailed algebra for the $2\times2$ normal form in Section~\ref{subsec:2d_normal} is summarized in this section.

\subsection*{Preliminaries}\label{app:2x2_details}
Let
\[
L=\begin{pmatrix} a&c\\ c&b\end{pmatrix},\qquad
B=L_{>0}-L_{<0}=\begin{pmatrix}0&c\\ -c&0\end{pmatrix}.
\]
For angles, set
\[
\psi(\theta)=\begin{pmatrix}\cos\theta&-\sin\theta\\ \sin\theta&\cos\theta\end{pmatrix}=R_2(\theta),\qquad
\phi(\theta)=\begin{pmatrix}\cos\theta&\ \sin\theta\\ -\sin\theta&\cos\theta\end{pmatrix}=R_2(-\theta).
\]
Then $\psi(\theta)^i\phi(\theta)^j=R_2((i-j)\theta)$ and $\phi(\theta)^{-1}=R_2(-\theta)=\psi(\theta)$.
We shall use the identities
\[\cos\alpha\cos\beta\pm \sin\alpha\sin\beta=\cos(\alpha\mp \beta),\qquad
\sin\alpha\cos\beta\pm \cos\alpha\sin\beta=\sin(\alpha\pm\beta).
\]

\subsection*{Rotation case: explicit right-hand side}
The deformed $2\times2$ evolution in the asymmetric rotation case is
\begin{equation}\label{sysrotation}
\dot L
=\cos\big((i-j+2)\theta\big)\cos(2\theta)\begin{pmatrix}2c^2 & bc-ac\\ bc-ac&-2c^2\end{pmatrix}
+\cos\big((i-j+2)\theta\big)\sin(2\theta)\begin{pmatrix}bc-ac & 2c^2\\ 2c^2& ac-bc\end{pmatrix}.
\end{equation}
A direct derivation from \eqref{eq:final-bracket} proceeds as follows.
 Write \([ ~,~]_{(\psi,\phi)}\) as in \eqref{eq:final-bracket}:
\[
\dot L=[B,L]_{(\psi(\theta),\phi(\theta))}
=\psi(\theta)\,B\,\psi(\theta)^i\phi(\theta)^j\,L\,\phi(\theta)^{-1}
-\phi(\theta)\,L\,\psi(\theta)^{i+2}\phi(\theta)^{j-2}\,B\,\psi(\theta)^{-1}.
\]
By using $\psi(\theta)^i\phi(\theta)^j=R_2((i-j)\theta)$ and $\phi(\theta)^{-1}=\psi(\theta)$,  equation above becomes
\begin{align*}
\dot L&=\psi(\theta)\,B\,R_2((i-j)\theta)\,L\,\psi(\theta)
-\phi(\theta)\,L\,R_2((i-j+4)\theta)\,B\,\phi(\theta)^{-1}.
\end{align*}
Compute
\[
\psi(\theta)\,B=\begin{pmatrix} c\sin\theta& c\cos\theta\\ -c\cos\theta& c\sin\theta\end{pmatrix},\qquad
B\,\phi(\theta)^{-1}=B\,\psi(\theta)=
\begin{pmatrix} -c\sin\theta& c\cos\theta\\ -c\cos\theta& -c\sin\theta\end{pmatrix},
\]
insert the two rotations, expand, and simplify by the above trigonometric identities to obtain \eqref{sysrotation}.     A detailed proof is given below.

\begin{align*}
    \dot L&=\psi(\theta)\,B\,\psi(\theta)^i\phi(\theta)^j\,L\,\phi(\theta)^{-1}
-\phi(\theta)\,L\,\psi(\theta)^{i+2}\phi(\theta)^{j-2}\,B\,\psi(\theta)^{-1}\\
&= \left(\begin{array}{ll}
\cos(\theta) & -\sin(\theta) \\
\sin(\theta) & \cos(\theta)
\end{array}\right)\left(\begin{array}{ll}
0 & c \\
-c & 0
\end{array}\right)\left(\begin{array}{ll}
\cos(\theta) & -\sin(\theta) \\
\sin(\theta) & \cos(\theta)
\end{array}\right)^i\\
&\quad\left(\begin{array}{ll}
\cos(\theta) & \sin(\theta) \\
-\sin(\theta) & \cos(\theta)
\end{array}\right)^j\left(\begin{array}{ll}
a & c \\
c & b
\end{array}\right)\left(\begin{array}{ll}
\cos(\theta) & \sin(\theta) \\
-\sin(\theta) & \cos(\theta)
\end{array}\right)^{-1}\\
&-\left(\begin{array}{ll}
\cos(\theta) & \sin(\theta) \\
-\sin(\theta) & \cos(\theta)
\end{array}\right)\left(\begin{array}{ll}
a & c \\
c & b
\end{array}\right)\left(\begin{array}{ll}
\cos(\theta) & -\sin(\theta) \\
\sin(\theta) & \cos(\theta)
\end{array}\right)^{i+2}\\
&\quad\left(\begin{array}{ll}
\cos(\theta) & \sin(\theta) \\
-\sin(\theta) & \cos(\theta)
\end{array}\right)^{j-2}\left(\begin{array}{ll}
0 & c \\
-c & 0
\end{array}\right)\left(\begin{array}{ll}
\cos(\theta) & -\sin(\theta) \\
\sin(\theta) & \cos(\theta)
\end{array}\right)^{-1}\\
&= \left(\begin{array}{ll}
\cos(\theta) & -\sin(\theta) \\
\sin(\theta) & \cos(\theta)
\end{array}\right)\left(\begin{array}{ll}
0 & c \\
-c & 0
\end{array}\right)\left(\begin{array}{ll}
\cos(\theta) & -\sin(\theta) \\
\sin(\theta) & \cos(\theta)
\end{array}\right)^i\\
&\quad\left(\begin{array}{ll}
\cos(\theta) & \sin(\theta) \\
-\sin(\theta) & \cos(\theta)
\end{array}\right)^j\left(\begin{array}{ll}
a & c \\
c & b
\end{array}\right)\left(\begin{array}{ll}
\cos(\theta) & -\sin(\theta) \\
\sin(\theta) & \cos(\theta)
\end{array}\right)\\
&-\left(\begin{array}{ll}
\cos(\theta) & \sin(\theta) \\
-\sin(\theta) & \cos(\theta)
\end{array}\right)\left(\begin{array}{ll}
a & c \\
c & b
\end{array}\right)\left(\begin{array}{ll}
\cos(\theta) & -\sin(\theta) \\
\sin(\theta) & \cos(\theta)
\end{array}\right)^{i+2}\\
&\quad\left(\begin{array}{ll}
\cos(\theta) & \sin(\theta) \\
-\sin(\theta) & \cos(\theta)
\end{array}\right)^{j-2}\left(\begin{array}{ll}
0 & c \\
-c & 0
\end{array}\right)\left(\begin{array}{ll}
\cos(\theta) & \sin(\theta) \\
-\sin(\theta) & \cos(\theta)
\end{array}\right)\\
&=\left(\begin{array}{ll}
\cos(\theta) & -\sin(\theta) \\
\sin(\theta) & \cos(\theta)
\end{array}\right)\left(\begin{array}{ll}
0 & c \\
-c & 0
\end{array}\right)
\left(\begin{array}{ll}
\cos((i-j)\theta) & -\sin((i-j)\theta) \\
\sin((i-j)\theta) & \cos((i-j)\theta)
\end{array}\right)\\
&\quad\left(\begin{array}{ll}
a & c \\
c & b
\end{array}\right)
\left(\begin{array}{ll}
\cos(\theta) & -\sin(\theta) \\
\sin(\theta) & \cos(\theta)
\end{array}\right)\\
&-\left(\begin{array}{ll}
\cos(\theta) & \sin(\theta) \\
-\sin(\theta) & \cos(\theta)
\end{array}\right)\left(\begin{array}{ll}
a & c \\
c & b
\end{array}\right)
\left(\begin{array}{ll}
\cos((i-j+4)\theta) & -\sin((i-j+4)\theta) \\
\sin((i-j+4)\theta) & \cos((i-j+4)\theta)
\end{array}\right)\\
&\quad\left(\begin{array}{ll}
0 & c \\
-c & 0
\end{array}\right)\left(\begin{array}{ll}
\cos(\theta) & \sin(\theta) \\
-\sin(\theta) & \cos(\theta)
\end{array}\right)\\
&=\left(\begin{array}{ll}
c\sin(\theta) & c\cos(\theta) \\
-c\cos(\theta) &c\sin(\theta)
\end{array}\right)\left(\begin{array}{ll}
\cos((i-j)\theta) & -\sin((i-j)\theta) \\
\sin((i-j)\theta) & \cos((i-j)\theta)
\end{array}\right)\\
&\quad\left(\begin{array}{ll}
a\cos(\theta)+c\sin(\theta) & -a\sin(\theta)+c\cos(\theta) \\
c\cos(\theta)+b\sin(\theta) &-c\sin(\theta)+b\cos(\theta)
\end{array}\right)\\
&-\left(\begin{array}{ll}
a\cos(\theta)+c\sin(\theta) & c\cos(\theta)+b\sin(\theta) \\
-a\sin(\theta)+c\cos(\theta) &-c\sin(\theta)+b\cos(\theta)
\end{array}\right)\\
&\quad\left(\begin{array}{ll}
\cos((i-j+4)\theta) & -\sin((i-j+4)\theta) \\
\sin((i-j+4)\theta) & \cos((i-j+4)\theta)
\end{array}\right)
\left(\begin{array}{ll}
-c\sin(\theta) & c\cos(\theta) \\
-c\cos(\theta) &-c\sin(\theta)
\end{array}\right)\\
&=\left(\begin{array}{ll}
c\sin((i-j+1)\theta) & c\cos((i-j+1)\theta) \\
-c\cos((i-j+1)\theta) &c\sin((i-j+1)\theta)
\end{array}\right)
\left(\begin{array}{ll}
a\cos(\theta)+c\sin(\theta) & -a\sin(\theta)+c\cos(\theta) \\
c\cos(\theta)+b\sin(\theta) &-c\sin(\theta)+b\cos(\theta)
\end{array}\right)\\
&-\left(\begin{array}{ll}
a\cos(\theta)+c\sin(\theta) & c\cos(\theta)+b\sin(\theta) \\
-a\sin(\theta)+c\cos(\theta) &-c\sin(\theta)+b\cos(\theta)
\end{array}\right)
\left(\begin{array}{ll}
c\sin((i-j+3)\theta) & c\cos((i-j+3)\theta) \\
-c\cos((i-j+3)\theta) &c\sin((i-j+3)\theta)
\end{array}\right)\\
&=(\cos((i-j)\theta)+\cos((i-j+4)\theta))\left(\begin{array}{ll}
c^2 & 0 \\
0 & -c^2
\end{array}\right)\\
&+(\sin((i-j)\theta)-\sin((i-j+4)\theta))\left(\begin{array}{ll}
0 & c^2 \\
c^2 & 0
\end{array}\right)\\
&+(\sin((i-j+1)\theta)\sin(\theta)+\cos((i-j+3)\theta)\cos(\theta))\left(\begin{array}{ll}
0 & -ac \\
bc & 0
\end{array}\right)\\
&+(\cos((i-j+1)\theta)\cos(\theta)-\sin((i-j+3)\theta)\sin(\theta))\left(\begin{array}{ll}
0 & bc \\
-ac & 0
\end{array}\right)\\
&+(\sin((i-j+1)\theta)\cos(\theta)-\sin((i-j+3)\theta)\cos(\theta))\left(\begin{array}{ll}
ac & 0 \\
0 & bc
\end{array}\right)\\
&+(\cos((i-j+1)\theta)\sin(\theta)+\cos((i-j+3)\theta)\sin(\theta))\left(\begin{array}{ll}
bc & 0 \\
0 & ac
\end{array}\right)\\
& =\cos ((i-j+2)\theta) \cos (2 \theta)\left(\begin{array}{cc}
2 c^{2} & b c-a c \\ 
b c-a c & -2 c^{2}
\end{array}\right)\\
&+ \cos ((i-j+2)\theta) \sin (2 \theta)\left(\begin{array}{cc}
b c-a c & 2 c^{2} \\ 
2 c^{2} & a c-b c
\end{array}\right).
\end{align*}
The deformed $2 \times 2$ evolution in the symmetric rotation case is expressed as

\begin{equation}\label{rotations}
\dot L=\cos((i+j)\theta)\cos(2\theta)\begin{pmatrix}
2c^2 &bc-ac\\
bc-ac&-2c^2
\end{pmatrix} + \cos((i+j)\theta)\sin(2\theta)\begin{pmatrix}
ac-bc & 2c^2\\
2c^2 & bc-ac
\end{pmatrix}.
\end{equation}

The derivation of this expression follows a procedure analogous to that of the previous case, where the rotational symmetry and corresponding matrix transformations are systematically applied. By considering the appropriate rotation angles and signs, the above evolution equation is derived.

\subsection*{Hyperbolic rotation: explicit right-hand side}
In the asymmetric hyperbolic case we have
\begin{equation}\label{syshyperrota}
\begin{aligned}
\dot L&=\cosh((i-j+2)\lambda)\cosh(2\lambda)\begin{pmatrix}2c^2&0\\0&-2c^2\end{pmatrix}
+\cosh((i-j+2)\lambda)\sinh(2\lambda)\begin{pmatrix}0&-2c^2\\ 2c^2&0\end{pmatrix}\\
&\quad+\cosh((i-j+2)\lambda)\cosh(4\lambda)\begin{pmatrix}0&bc-ac\\ bc-ac&0\end{pmatrix}\\
&\quad+\sinh((i-j+2)\lambda)\sinh(2\lambda)\begin{pmatrix}0&-bc-ac\\ bc+ac&0\end{pmatrix}\\
&\quad+\sinh((i-j+2)\lambda)\sinh(3\lambda)\sinh\lambda\begin{pmatrix}-2bc&0\\ 0&2ac\end{pmatrix}\\
&\quad+\sinh((i-j+2)\lambda)\cosh(3\lambda)\cosh\lambda\begin{pmatrix}2ac&0\\ 0&-2bc\end{pmatrix}.
\end{aligned}
\end{equation}
The above process comes from a direct computation by using \eqref{eq:final-bracket} and
\[
\psi(\lambda)=\begin{pmatrix}\cosh\lambda&\sinh\lambda\\ \sinh\lambda&\cosh\lambda\end{pmatrix},\quad
\phi(\lambda)=\begin{pmatrix}\cosh\lambda&-\sinh\lambda\\ -\sinh\lambda&\cosh\lambda\end{pmatrix},
\]
\[
\psi(\lambda)B=\begin{pmatrix}-c\sinh\lambda& c\cosh\lambda\\ -c\cosh\lambda& c\sinh\lambda\end{pmatrix},\quad
B\phi(\lambda)=\begin{pmatrix}-c\sinh\lambda& c\cosh\lambda\\ -c\cosh\lambda& c\sinh\lambda\end{pmatrix},
\]
and the identities for $\cosh,\sinh$.
The detailed proof is as follows.

\begin{align*}
    \dot L&=\psi(\lambda)\,B\,\psi(\lambda)^i\phi(\lambda)^j\,L\,\phi(\lambda)^{-1}
-\phi(\lambda)\,L\,\psi(\lambda)^{i+2}\phi(\lambda)^{j-2}\,B\,\psi(\lambda)^{-1}\\
    &=\left(\begin{array}{cc}
\cosh(\lambda) & \sinh(\lambda) \\ 
\sinh(\lambda) & \cosh(\lambda)
\end{array}\right)\left(\begin{array}{ll}
0 & c \\
-c & 0
\end{array}\right)\left(\begin{array}{cc}
\cosh(\lambda) & \sinh(\lambda) \\ 
\sinh(\lambda) & \cosh(\lambda)
\end{array}\right)^{i-j}\\
&\left(\begin{array}{ll}
a & c \\
c & b
\end{array}\right)\left(\begin{array}{cc}
\cosh(\lambda) & \sinh(\lambda) \\ 
\sinh(\lambda) & \cosh(\lambda)
\end{array}\right)\\
&-\left(\begin{array}{cc}
\cosh(\lambda) & -\sinh(\lambda) \\ 
-\sinh(\lambda) & \cosh(\lambda)
\end{array}\right)\left(\begin{array}{ll}
a & c \\
c & b
\end{array}\right)\left(\begin{array}{cc}
\cosh(\lambda) & \sinh(\lambda) \\ 
\sinh(\lambda) & \cosh(\lambda)
\end{array}\right)^{i-j+4}\\
&\quad\left(\begin{array}{ll}
0 & c \\
-c & 0
\end{array}\right)\left(\begin{array}{cc}
\cosh(\lambda) & -\sinh(\lambda) \\ 
-\sinh(\lambda) & \cosh(\lambda)
\end{array}\right)\\
&=\left(\begin{array}{cc}
\cosh(\lambda) & \sinh(\lambda) \\ 
\sinh(\lambda) & \cosh(\lambda)
\end{array}\right)\left(\begin{array}{ll}
0 & c \\
-c & 0
\end{array}\right)
\left(\begin{array}{cc}
\cosh((i-j)\lambda) & \sinh((i-j)\lambda) \\ 
\sinh((i-j)\lambda) & \cosh((i-j)\lambda)
\end{array}\right)\\
&\quad\left(\begin{array}{ll}
a & c \\
c & b
\end{array}\right)\left(\begin{array}{cc}
\cosh(\lambda) & \sinh(\lambda) \\ 
\sinh(\lambda) & \cosh(\lambda)
\end{array}\right)\\
&-\left(\begin{array}{cc}
\cosh(\lambda) & -\sinh(\lambda) \\ 
-\sinh(\lambda) & \cosh(\lambda)
\end{array}\right)\left(\begin{array}{ll}
a & c \\
c & b
\end{array}\right)
\left(\begin{array}{cc}
\cosh((i-j+4)\lambda) & \sinh((i-j+4)\lambda) \\ 
\sinh((i-j+4)\lambda) & \cosh((i-j+4)\lambda)
\end{array}\right)\\
&\quad\left(\begin{array}{ll}
0 & c \\
-c & 0
\end{array}\right)\left(\begin{array}{cc}
\cosh(\lambda) & -\sinh(\lambda) \\ 
-\sinh(\lambda) & \cosh(\lambda)
\end{array}\right)\\
&=\left(\begin{array}{cc}
-c\sinh(\lambda) & c\cosh(\lambda) \\ 
-c\cosh(\lambda) & c\sinh(\lambda)
\end{array}\right)\left(\begin{array}{cc}
\cosh((i-j)\lambda) & \sinh((i-j)\lambda) \\ 
\sinh((i-j)\lambda) & \cosh((i-j)\lambda)
\end{array}\right)\\
&\quad\left(\begin{array}{cc}
a\cosh(\lambda)+c\sinh(\lambda) & a\sinh(\lambda)+c\cosh(\lambda) \\ 
c\cosh(\lambda)+b\sinh(\lambda) & c\sinh(\lambda)+b\cosh(\lambda)
\end{array}\right)\\
&-\left(\begin{array}{cc}
a\cosh(\lambda)-c\sinh(\lambda) & c\cosh(\lambda)-b\sinh(\lambda) \\ 
c\cosh(\lambda)-a\sinh(\lambda) & b\cosh(\lambda)-c\sinh(\lambda)
\end{array}\right)\\
&\quad\left(\begin{array}{cc}
\cosh((i-j+4)\lambda) & \sinh((i-j+4)\lambda) \\ 
\sinh((i-j+4)\lambda) & \cosh((i-j+4)\lambda)
\end{array}\right)
\left(\begin{array}{cc}
-c\sinh(\lambda) & c\cosh(\lambda) \\ 
-c\cosh(\lambda) & c\sinh(\lambda)
\end{array}\right)\\
&=\left(\begin{array}{cc}
c\sinh((i-j-1)\lambda) & c\cosh((i-j-1)\lambda) \\ 
-c\cosh((i-j-1)\lambda) & -c\sinh((i-j-1)\lambda)
\end{array}\right)\\
& \left(\begin{array}{cc}
a\cosh(\lambda)+c\sinh(\lambda) & a\sinh(\lambda)+c\cosh(\lambda) \\ 
c\cosh(\lambda)+b\sinh(\lambda) & c\sinh(\lambda)+b\cosh(\lambda)
\end{array}\right)\\
&-\left(\begin{array}{cc}
a\cosh(\lambda)-c\sinh(\lambda) & c\cosh(\lambda)-b\sinh(\lambda) \\ 
c\cosh(\lambda)-a\sinh(\lambda) & b\cosh(\lambda)-c\sinh(\lambda)
\end{array}\right)\\
&\left(\begin{array}{cc}
-c\sinh((i-j+5)\lambda) & c\cosh((i-j+5)\lambda) \\ 
-c\cosh((i-j+5)\lambda) & c\sinh((i-j+5)\lambda)
\end{array}\right)\\
&= \cosh ((i-j+2)\lambda) \cosh (2 \lambda)\left(\begin{array}{cc}
2 c^{2} & 0 \\ 
0 & -2 c^{2}
\end{array}\right)\\
&+\cosh ((i-j+2)\lambda)\sinh (2 \lambda)\left(\begin{array}{cc}
0 & -2 c^{2} \\ 
2 c^{2} & 0
\end{array}\right) \\
&+\cosh ((i-j+2)\lambda) \cosh (4 \lambda)\left(\begin{array}{cc}
0 & bc-ac \\
bc-ac & 0
\end{array}\right) \\
&+\sinh ((i-j+2)\lambda) \sinh (2 \lambda)\left(\begin{array}{cc}
0 & -bc-ac \\ 
bc+ac & 0
\end{array}\right)\\
&+\sinh ((i-j+2)\lambda) \sinh (3 \lambda)\sinh(\lambda)\left(\begin{array}{cc}
-2bc & 0 \\ 
0 & 2ac
\end{array}\right)\\
&+\sinh ((i-j+2)\lambda) \cosh (3 \lambda)\cosh(\lambda)\left(\begin{array}{cc}
2ac & 0 \\ 
0 & -2bc
\end{array}\right).
\end{align*}

\bigskip

\section{Appendix III: Spectral data, $\tau$-functions, and the $2\times2$ inverse-scattering formulas}\label{app:IS}\label{app:2x2_solution}

In this appendix we give a self-contained derivation of the formulas used in Section~\ref{sect6} for the weakly coupled case in $2\times2$-dimension.

Fix $S=\operatorname{diag}(1,-1)$ and consider
\[
L_2=\begin{pmatrix}a_1&-b_1\\ b_1&-a_2\end{pmatrix},\qquad
\widehat L_2=\begin{pmatrix}\widehat a_1&-\widehat b_1\\ \widehat b_1&-\widehat a_2\end{pmatrix}.
\]
In the specific seed choice
\[
a_1=\widehat a_1=\widehat b_1=0,\quad b_1=\widehat a_2=1,\quad a_2=m,
\]
we have the eigenvalues and eigenvectors
\[
\lambda_{1,2}=\tfrac12\big(\mp\sqrt{m^2-4}-m\big),\qquad
\widehat\lambda_1=0,\ \ \widehat\lambda_2=-1,
\]
\[
\Phi_2^0=\begin{pmatrix}\lambda_1&\lambda_2\\ -1&-1\end{pmatrix},\qquad
\widehat\Phi_2^0=I_2.
\]

We interpret the bracket $\ang{\cdot}$ as a finite spectral pairing over the set $\{\lambda_1,\lambda_2\}$ with canonical weights; explicitly, for a scalar function $F(\lambda)$ and a vector(-valued) symbol $v^0$,
\[
\ang{v^0 v^0 e^{\lambda t}}:=\sum_{k=1}^2 v^0(\lambda_k)\,v^0(\lambda_k)^{\!\top}\,e^{\lambda_k t}.
\]
Define $\tau$-functions
\[
D_0(t)\equiv 1,\qquad D_1(t)=\ang{\phi_1^0\phi_1^0 e^{\lambda t}},\qquad
D_2(t)=\det\begin{pmatrix}
\ang{\phi_1^0\phi_1^0 e^{\lambda t}} & \ang{\phi_1^0\phi_2^0 e^{\lambda t}}\\
-\ang{\phi_2^0\phi_1^0 e^{\lambda t}} & -\ang{\phi_2^0\phi_2^0 e^{\lambda t}}
\end{pmatrix},
\]
with $D_1(0)=D_2(0)=1$ and $\widehat D_1(0)=\widehat D_2(0)=0$ (consistent with the seed).

Set
\[
\Phi_2(t)=\frac{1}{e^{(\lambda_2+2\lambda_1)t}}
\begin{pmatrix}\lambda_1 e^{\lambda_1 t}&\lambda_2 e^{\lambda_2 t}\\ -e^{\lambda_1 t}&-e^{\lambda_2 t}\end{pmatrix},
\]
and define $\widehat\Phi_2(t)$ as the unique solution of
\[
\dot\Phi_2=[B_2,\Phi_2],\qquad
\dot{\widehat\Phi}_2=[\widehat B_2,\Phi_2]+[B_2,\widehat\Phi_2],
\]
with initial data $\Phi_2(0)=\Phi_2^0$, $\widehat\Phi_2(0)=\widehat\Phi_2^0$. One checks directly that these choices reproduce the weakly coupled $2\times2$ system in Section~\ref{subsec:2d_normal}.

Define the Jost-type functions
\[
\phi_1(\lambda,t)=\frac{e^{\lambda t}}{\sqrt{D_1(t)D_0(t)}}\,\ang{\phi_1^0\phi_1^0 e^{\lambda t}},\qquad
\phi_2(\lambda,t)=\frac{e^{\lambda t}}{\sqrt{D_2(t)D_1(t)}}
\det\begin{pmatrix}
\ang{\phi_1^0\phi_1^0 e^{\lambda t}} & \phi_1^0\\
\ang{\phi_2^0\phi_1^0 e^{\lambda t}} & \phi_2^0
\end{pmatrix}.
\]
A similar construction yields $\widehat\phi_1,\widehat\phi_2$ in terms of $\widehat D_i$.
Substituting these definitions gives the closed forms
\[
L_2(t)=\begin{pmatrix}
\lambda_2 e^{2(\lambda_1+\lambda_2)t}+\lambda_1 e^{4\lambda_1 t} & -e^{2(\lambda_1+\lambda_2)t}-e^{4\lambda_1 t}\\
-e^{2(\lambda_1+\lambda_2)t}-e^{4\lambda_1 t} & \lambda_1 e^{2(\lambda_1+\lambda_2)t}+\lambda_2 e^{4\lambda_1 t}
\end{pmatrix},
\]
\[
\widehat L_2(t)=e^{-(\lambda_2+2\lambda_1)t}\begin{pmatrix}
\widehat a_{11}(t)&\widehat a_{12}(t)\\
\widehat a_{21}(t)&\widehat a_{22}(t)
\end{pmatrix},
\]
where the entries $\widehat a_{ij}(t)$ follow directly from the above spectral representation (their expressions are explicit but lengthy, so we omit them here to keep the presentation concise).

\bigskip

\section{Appendix IV: Verification of the Miura transformation (Section~\ref{sect5})}\label{app:Miura}

Let $u=(a_{ij})_{n\times n}$. The linear map $\mathcal{M}_{\theta}\in \mathfrak{gl}(V)$ is defined by 
\[
\mathcal{M}_{\theta}u= \psi(\theta)u\psi(\theta)^{-1}.
\]
Then the Miura variables $\mathbf u=(\mathbf a_{ij})_{n\times n}$ in Section~\ref{sect5} read
\[
\mathbf u=\mathcal{M}_{\theta}\,u.
\]
Differentiate in time and use the ($n\times n$) undeformed flow $\dot u=\mathcal{F}(u)$ induced by $\dot L=[B,L]$ to get
\[
\dot{\mathbf u}=\mathcal{M}_{\theta}\,\mathcal{F}(u)
=\mathcal{F}_{\theta}(u),
\]
where $\mathcal{F}_{\theta}$ is precisely the right-hand side of the deformed system \eqref{def_bihom_bracket}. The middle equality follows from the explicit matrix identity (checked by direct multiplication)
\[
\mathcal{M}_{\theta}\,(\text{commutator coefficients of }\mathcal{F})
=(\text{deformed coefficients in }\eqref{def_bihom_bracket}).
\]
When $n=2$, this corresponds to the specific trigonometric redistribution outlined in Appendix~\ref{app:2x2}. The same argument applies to the coupled case in Section~\ref{sect6} (replace $u$ (or $\mathbf u$) by the triple $(a_1,a_2,b_1)$ (or $(\widehat a_1,\widehat a_2,\widehat b_1)$) and use the block-diagonal action of $R_2(\theta)$).

\bigskip

\section{Appendix V: Consistency checks and limiting regimes}\label{app:consistency}

From \eqref{rotations}, the derivative $\dot L$ is a sum of two terms with common factor $\cos\big((i+j)\theta\big)$. Then $\dot L\equiv 0$ if and only if
\[
\cos\big((i+j)\theta\big)=0,
\]
(since the first matrix in \eqref{sysrotation} is generically nonzero whenever $c\neq0$). This condition gives rise to three cases,
\begin{enumerate}
    \item $
\theta=\frac{\pi}{2}+k\pi\quad\text{and}\quad i+j\ \text{is odd}.\label{case 1}
$
\item $
\theta=\frac{\pi}{4}+\frac{k\pi}{2}\quad\text{and}\quad i+j\ \text{is even, } i+j\neq 0.\label{case 2}
$
\item $\theta \text{ does not exist}\quad\text{and}\quad i+j=0.$
\end{enumerate}
Cases \ref{case 1} and \ref{case 2} refine the remark in Section~\ref{sect4}.

For an isospectral Lax flow $\dot L=[B,L]$, one must have $\frac{d}{dt}\operatorname{tr}(L^p)=0$ for all $p\ge1$. In particular, $\frac{d}{dt}\operatorname{tr}L=0$ and $\frac{d}{dt}\operatorname{tr}L^2=0$.
From \eqref{syshyperrota},
\[
\frac{d}{dt}\operatorname{tr}L
=2(ac-bc)\,\sinh((i-j+2)\lambda)\cosh(4\lambda).
\]
Unless $\lambda=0$ (or the degenerate $ac=bc$), this is generically nonzero, so the flow is not isospectral. When $\lambda=0$, the bracket reduces to the undeformed one and we recover the standard Toda dynamics, which proves the “if and only if” assertion in Section~\ref{sect4} under the natural (isospectral) notion of integrability for Lax flows.

For $\psi(r)=r^p I_n$ and $\phi(r)=r^q I_n$, \eqref{deformedSys} becomes $\dot L=r^{p(i+1)+q(j-1)}[B,L]$, which is a uniform time-rescaling. Thus, the solution set is unchanged up to the scalar factor as stated in Section~\ref{sect4}.

Both deformations are continuous in the parameters:
\[
\theta\to 0\ \text{or}\ \lambda\to 0 \quad\Longrightarrow\quad [\cdot,\cdot]_{(\psi,\phi)}\to [\cdot,\cdot].
\]
Thus the classical Toda lattice is recovered as a special point in our parameter space, while the formulas here quantify in what sense the model is a unified extension.

\bigskip
\noindent{\bf Acknowledgements} This work is partly supported by NSFC (National Natural Science Foundation of China) Grant No. 12071237, No. 12271089 and No. 11871144.

\end{document}